\newcolumntype{L}{>{\centering\arraybackslash}m{5cm}}
\newlength\myindent
\newtheorem{proposition}{Proposition}[section]
\begin{document}

\title{Buffer-aided Resource Allocation for a Price Based Opportunistic Cognitive Radio Network}

\author{Nilanjan~Biswas, Goutam~Das, and Priyadip~Ray\\
G.S.Sanyal School of Telecommunications, IIT Kharagpur, India}

\maketitle

\begin{abstract}
In this paper, a resource allocation problem for an opportunistic cooperative cognitive radio network is considered, where cognitive radio nodes send their hard decisions to the fusion center. The fusion center plays dual role, i.e., takes the global decision (i.e., decision about the primary user's activity) as well as allocates transmission time durations among cognitive radio nodes. Revenue based utility functions are considered at the fusion center and cognitive radio nodes. An optimization problem is formulated to maximize the fusion center's revenue while satisfying some well defined constraints. User selection among cognitive radio nodes is performed in order to make the optimization problem feasible.
\end{abstract}

\begin{IEEEkeywords}
 Cognitive radio, spectrum sensing, resource allocation, user selection
\end{IEEEkeywords}

\section{Introduction}
In last few years, the communication industry has witnessed massive growth in number of wireless devices as well as traffic. As per the report by Cisco \cite{cisco2016global}, number of global wireless users will rise to 5.5 billion in 2020 from 4.8 billion in 2015. The concept of spectrum sharing has emerged to cope with this ever increasing wireless data demand. In literature \cite{Nokia,malladi2016best,zhang2015lte,abdelraheem2018bargaining,mehrnoush2018fairness,mueck2015spectrum}, we find the usage of spectrum sharing concept in various forms like long term evolution-unlicensed (LTE-U), LTE-licensed assisted access (LTE-LAA), MULTEFIRE, licensed shared access (LSA), and spectrum access system (SAS). Spectrum sharing is considered one of the major requirements in 5G in order to make devices backward compatible \cite{WinNT,WinNT1}.

Cognitive radio (CR) \cite{haykin2005cognitive} is considered as an efficient spectrum sharing technology, whose application can be found in different modern wireless communications contexts like machine-to-machine (M2M) \cite{lee2013feasibility}, internet-of-things (IoT) \cite{khan2017cognitive}, network virtualization \cite{liang2015wireless}, and hotspots \cite{kim2012admission}. For more information about past standardizations and current initiatives on CR technology in spectrum sharing, one may refer to \cite{sherman2008ieee} and \cite{bhattarai2016overview} respectively. In CR networks, unlicensed secondary users (SUs) try to opportunistically access the primary user's (PU's) licensed spectrum with the help of spectrum sensing. In \cite{yucek2009survey}, we find different spectrum sensing techniques related to individual and cooperative sensing. Cooperative spectrum sensing (CSS) is regarded as one of the efficient sensing techniques as it is robust against wireless channel impairments \cite{ganesan2005cooperative,mishra2006cooperative,unnikrishnan2008cooperative}. In CSS, multiple SUs sense the PU's spectrum and send their sensing information to a central unit, which is commonly known as the fusion center (FC). Sensing information may be of two types, which are hard sensing information (i.e., single bit) and soft sensing information (i.e., multi bits). Hard sensing information is often preferred for it's bandwidth and energy efficiencies. The FC fuses all sensing information received from SUs and takes the global decision about the PU's activity.

Based on the outcome of spectrum sensing, the FC decides whether to allocate available resources among SUs and allow SUs to use the licensed spectrum or not. In the context of resource allocation in opportunistic CR networks, we observe that people have considered either multiple frequency resources \cite{fan2010optimal,fan2011joint,almalfouh2010uplink,xie2012dynamic,lim2012joint,wang2013resource,shahini2018joint,ejaz2018multi,kulkarni2017multi} or single frequency resource \cite{biswas2018optimal} for the primary network. In case of multiple frequency resources, people have considered frequency resource allocation \cite{fan2010optimal,shahini2018joint}, frequency resource and power allocation \cite{fan2011joint,almalfouh2010uplink,xie2012dynamic,lim2012joint,wang2013resource}, and time and energy allocation \cite{kulkarni2017multi}. Authors of \cite{biswas2018optimal} have considered time allocation for a single frequency resource based scenario. The FC performs the resource allocation in order to maximize a global utility function, e.g., secondary network's throughput \cite{almalfouh2010uplink,fan2011joint,xie2012dynamic,wang2013resource,kulkarni2017multi,shahini2018joint}, power consumption \cite{ejaz2018multi}, and economy based utility function \cite{lim2012joint,biswas2018optimal} have been considered in literature. In this paper, we consider single frequency resource for a CSS based CR network, where the FC performs time allocation among SUs to maximize an economy based utility function.

For a CSS based CR networks, it is important to design decision thresholds for SUs and the FC. Related literature includes \cite{biswas2018optimal}, where authors have jointly designed decision thresholds of SUs and the FC and have shown the efficacy of the joint design. Another two parameters which are relevant for resource allocation in a CSS based CR networks are set of SUs among which resource to be allocated \footnote{The need of SU selection may arise for a scenario where the available resource falls shorter than the required.} and amount of resource to be allocated to a selected SU. We observe that resource allocation procedure strongly depends on the detection performance of CR networks. As the detection performance depends on the number of SUs participating in the resource allocation procedure, it is important to design decision thresholds and other resource allocation parameters in a joint way. In literature, we find \cite{fan2011joint,lim2012joint}, where authors have considered joint design of decision thresholds and resource allocation. Authors of \cite{fan2011joint} have considered soft sensing fusion at the fusion center, such that, design of decision thresholds at the FC is important. On the other hand, decision thresholds for multiple bands have been evaluated in \cite{lim2012joint} for a non-CSS scenario. It is to be noted that in both of \cite{fan2011joint,lim2012joint}, authors have assumed sufficient available resources in order to fulfill SUs' demands. Therefore, SU selection problem is not relevant in \cite{fan2011joint,lim2012joint}. Unlike \cite{fan2011joint,lim2012joint}, in this paper, we consider CSS with hard decision fusion and we consider constraint on the available resource, which brings SU selection problem in our paper.

More specifically, we consider economic utilities at the FC and SUs. SUs pay to the FC to get access of the licensed spectrum, which form the FC's utility. Whereas, a SU's utility is constructed with two different economic cost factors, i.e., economic gain due to licensed spectrum usage and corresponding economic loss faced due to energy loss during spectrum sensing. We formulate an optimization problem to maximize the FC's utility while considering different constraints: (i) interference constraint, (ii) positivity constraints on SUs' utilities, (iii) upper bounds on SUs' time allocations from traffic sizes, and (iv) upper bound on total time allocation from the frame duration. The third constraint signifies that the FC does not allocate more time durations to a SU than what is required to clear the SU's buffer. Due to the upper bound on total time allocation, SU selection problem arises in this paper. Only selected SUs take part in sensing and further allocation process. Increasing the number of SUs participating in the detection process may help in reducing interference on the primary network from the secondary network in the cost of lower time duration for resource allocation. This motivates us in jointly designing different parameters, i.e., decision thresholds (of SUs' and the FC's), SU set selection, and time allocation to selected SUs. To the best of our knowledge, such kind of joint design has not been considered earlier in CR networks context.   

We next summarize our contributions as follows:

\begin{itemize}
 \item We design decision thresholds (for SUs and the FC), select set of SUs for resource allocation, and allocation time durations among selected SUs in a joint manner. 

 \item We consider heterogeneous data requirements for SUs. The FC has knowledge of these requirements and allocates resource accordingly. 

 \item The optimization problem consists of both integer and continuous variables which makes the solution process non-trivial. We analyse the optimization problem and devise an algorithm to find out solution, which we prove to be optimal under a particular scenario.
\end{itemize}

The remainder of the paper is organized as follows: Section~\ref{sysmod} presents the system model. In Section~\ref{ORA}, we discuss about different utility functions and constraints. Section~\ref{optimResource} represents the optimization problem which we consider in this paper. Corresponding algorithm for solving the optimization problem is given in Section~\ref{optimsolution}. In Section~\ref{result}, we show different results and describe the efficacy of our proposed algorithm. We conclude in Section~\ref{conclusion}.

\section{System Model}\label{sysmod}
In Fig.~\ref{cooperative}, we provide an overview of our system model, where a secondary network co-exist with a primary network. The secondary network opportunistically uses the primary network's licensed spectrum. We consider single primary transmitter (PT) and primary receiver (PR) in the primary network. In secondary network, there are multiple SUs, i.e., $SU_i$, $i=1,..,M$, and a FC. Corresponding set of all SUs is defined by $\mathcal{G}=\left\{SU_i\right\}_{i=1}^M$. SUs participate with the FC in CSS to detect the PT's activity. Once the PT is detected idle, SUs use the licensed spectrum to communicate with the FC. In Fig.~\ref{cooperative}, solid lines indicate intended signals, whereas, red dashed lines are for interfering signals and violet dashed lines are for sensing signals. For missed-detection, the FC and the PR receives interference from the PT and SUs respectively.
\begin{figure}[t!]
	\centering
	\includegraphics[trim=0cm 3cm 0cm 3cm,clip=true,width=12cm]{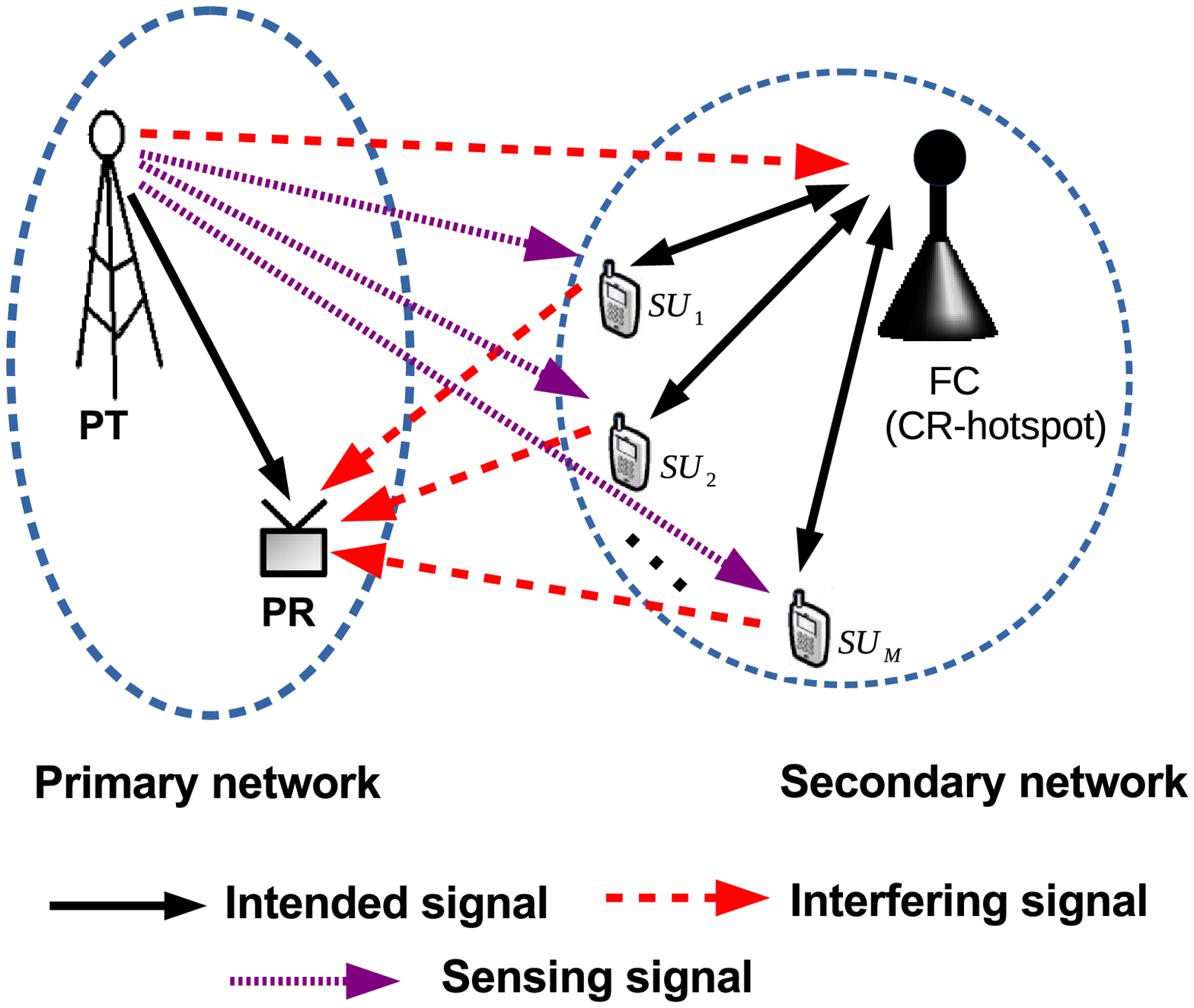}
	\DeclareGraphicsExtensions{.eps}
	\caption{System Model}
	\label{cooperative}
\end{figure}

\textit{Application of the System Model: }Now a days, we find significant usage of mobile hotspots at public places like airports, coffee shops, hotels etc. Hotspots may be created by using the unlicensed industrial, scientific, and medical (ISM) spectrum or licensed spectrum \cite{hotspot1}. We may map our system model with a CR enabled hotspot scenario, where the FC may behave as hotspot sharing the licensed spectrum of the primary network. However, the FC can share the licensed spectrum in an opportunistic way. As the sharing takes place in licensed spectrum, a business perspective arises for such CR enabled hotspot scenario. In Section~\ref{optimResource}, we discuss about different economic costs which are considered in our system model.

The secondary network operates in a time slotted way, where we denote a slot by a time frame. There are multiple phases in a frame duration, which we show in Fig.~\ref{frame} and then describe different phases in detail.
\begin{figure}[t!]
	\centering
	\includegraphics[trim=0cm 9cm 0cm 6cm,clip=true,width=12cm]{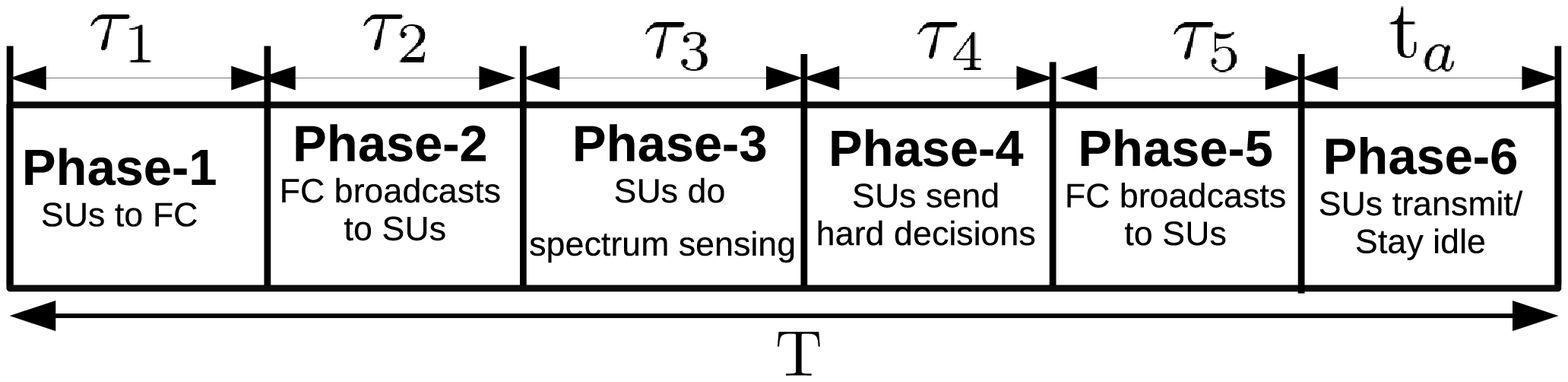}
	\DeclareGraphicsExtensions{.eps}
	\caption{Frame structure}
	\label{frame}
\end{figure}
\begin{enumerate}
	\item \textit{Phase-1:} At the beginning of a frame, each SU informs the FC about it's stored data at buffer and traffic type. SUs define their payments to the FC based on their traffic types. Communications between SUs and the FC take place over a common control channel which is used by SUs in a TDMA fashion. Required time is denoted by $\tau_1=M\tau_r$, where $\tau_r$ is the required reporting time by a SU.
	\item \textit{Phase-2:} The FC determines activity factors for SUs and broadcasts it to the secondary network. We define $\alpha_i\in\{0,1\}$, as the activity factor for $SU_i$, where $\alpha_i=1$ and 0 indicates that the $SU_i$ is allocated and not allocated time duration respectively for accessing the licensed spectrum. We denote the required time for broadcasting by $\tau_2$. Broadcasting is done on the control channel.
	\item \textit{Phase-3:} Active SUs perform spectrum sensing synchronously for $\tau_3=N\tau_s$, where $N$ is number of sensing samples and $\tau_s$ is the sampling interval.
	\item \textit{Phase-4:} After sensing, SUs send their hard decisions over the control channel in TDMA fashion. Total reporting time is $\tau_4=\sum_{i=1}^M \alpha_i\tau_r^{'}$, where $\tau_r^{'}$ is the reporting time of each SU.
	\item \textit{Phase-5:} The FC takes the final decision about the PT's activity with the help of collected hard decisions. If the PT is detected as idle, then the FC broadcasts the allotted time durations to active SUs over the control channel. Required broadcasting time is $\tau_5$.
	\item \textit{Phase-6:} The rest of the duration over which the SUs may access the licensed band becomes $t_a=T-\sum_{k=1}^5 \tau_k$. As SUs access the licensed channel orthogonally in time domain, we can write $\sum \alpha_i t_i \leq t_a$, where $t_i$ is the allotted time for data transmission for $SU_i$.
\end{enumerate}

We consider following assumptions in our system model:
\begin{itemize}
	\item Both primary and secondary networks follow synchronized frame structure. Communications between the SUs and the FC take place over a predefined control channel.
	\item During a frame duration, a SU possesses a particular traffic type. SUs offer prices to the FC based on their corresponding traffic types. 
	\item Channels between SUs and the FC are independent but not identical. The FC has knowledge about instantaneous channel gains.
	\item The FC has knowledge about the statistic of the channel, i.e., between the PT and the FC.
	\item Sensing channels, i.e., channels between the PT and SUs are independent and identical.  
\end{itemize}

\subsection{Local sensing procedure}
We denote the observation vector at $SU_i$ for spectrum sensing as $\bm{x_i}=[x_{i}(1),...,x_{i}(N)]$. We define the presence and absence of the PT by hypotheses $H_1$ and $H_0$ respectively, and write the received signal at $SU_i$ during $j\textsuperscript{th}$ sampling instant as:

\begin{equation}
 \begin{array}{l}
  H_1:\hspace{5 mm} x_{i}(j)=h_ip(j)+w_{i}(j)\\
  H_0:\hspace{5 mm} x_{i}(j)=w_{i}(j)\\ 
 \end{array}
\end{equation}
\noindent
where, $j=1,....,N$. $h_i$ is the channel coefficient between the PT and $SU_i$, which remains constant during a frame duration. $p(j)$ and $w_{i}(j)$ are the PT's signal during $j\textsuperscript{th}$ sampling instant and the additive white Gaussian noise (AWGN) at $SU_i$. We denote noise and the PT signal's variances by $\sigma_w^2$ and $\sigma_p^2$ respectively.

We choose energy detection at SUs as it is less complex \cite{yucek2009survey}. Test statistic for energy detection at $U_i$ for taking a hard decision (i.e., 0 or 1) is:
\begin{equation}
 y_i=\frac{1}{N}\sum_{j=1}^{N}|x_{i}(j)|^2\mathop{\gtreqless}_{H_0}^{H_1} \epsilon_i
\label{test_CR}
\end{equation}
\noindent
where, $\epsilon_i$ is the detection threshold at $SU_i$.

Under the assumption of sufficiently large value of $N$, false-alarm and detection probabilities of $SU_i$ may be written as\cite{liang2008sensing}:
\begin{subequations}
 \begin{align}
  P_{fa}^i &= Q\left\{\left(\frac{\epsilon_i}{\sigma_w^2}-1\right)\sqrt{N}\right\} \label{pf}\\	
  P_{d}^i  &= Q\left\{\frac{1}{\sqrt{2\gamma_i+1}}\left(Q^{-1}(P_{fa}^i)-\sqrt{N}\gamma_i\right)\right\}\label{pd}
 \end{align}
\end{subequations}\noindent
where, $Q(x)=\frac{1}{\sqrt{2\pi}}\int_x^\infty\exp\left(-\frac{t^2}{2}\right)dt$ and $\gamma_i=\frac{\left|h_i\right|^2\sigma_p^2}{\sigma_w^2}$ is the received SNR at $SU_i$ from the PT during sensing. 

We assume that SUs get identical sensing SNR, i.e., $\gamma_i=\gamma,\forall{SU_i\in \mathcal{G}}$, for which we get identical false-alarm and detection probabilities, i.e., $P_{fa}^i=P_{fa}$ and $P_d^i=P_d$. Please note that identical sensing SNR may take place when SUs are in much lesser distance among each other compared to distance between the PT and SUs. 

\subsection{Fusion process}
The FC collects active SUs' hard decisions (i.e., 0 or 1) and fuses them by using optimal Chair-Varshney fusion rule \cite{varshney1996distributed}. We can represent the FC's false-alarm and detection probabilities as functions of each SU's false-alarm probability, decision threshold, and number of hard decisions:
\begin{subequations}
 \begin{align}
  P_{FA}(P_{fa},k,L) &= \sum_{l=k}^{L} {L \choose l} P_{fa}^l(1-P_{fa})^{L-l} \label{Pf}\\	
  P_{D}(P_{fa},k,L)  &= \sum_{l=k}^{L} {L \choose l} P_{d}^l(1-P_{d})^{L-l} \label{Pd}
 \end{align}
\end{subequations}\noindent
where, $L=\sum_{SU_i\in \mathcal{G}} \alpha_i$, represents the cardinality of active SUs' set and $k$ is the decision threshold at the FC.

\section{Utility functions and constraints}\label{ORA}
In this section, we briefly describe utilities at the FC and SUs, and constraints under which the FC's utility is maximized during the resource allocation.    

\subsection{Different associated prices}\label{prices}
The resource allocation procedure strongly depends on different associated prices. In Table~\ref{my-label14}, we show different associated economic prices for different conditions (i.e., the PT's actual state and sensing result), when $SU_i$ is allocated time duration $t_i$. In third column of Table~\ref{my-label14}, we show that $SU_i$ pays amount $a_i$ to the FC for each successfully transmitted bit for two different detection scenarios, i.e., proper detection of the PT's absence and missed-detection of the PT's presence. For the former detection case, normalized rate at the FC from $SU_i$ becomes $r_0^i$, whereas, for the latter case, it becomes $r_1^i$, where, 
\begin{subequations}
	\begin{align}
	r_0^i &= B_{w}\log_2\left(1+\frac{g_{SU_i-FC}P_{ST}}{N_0}\right) \text{ bits/sec.}\label{r0}\\
	r_1^i &= B_{w}E_{g_{PT-FC}}\left[\log_2\left(1+\frac{g_{SU_i-FC}P_{ST}}{g_{PT-FC}P_{PT}+N_0}\right)\right] \text{ bits/sec.}\label{r1}
	\end{align}
\end{subequations}\noindent
where $B_w$ is the bandwidth of the channel of interest, $g_{SU_i-FC}$ is instantaneous channel power gain from $SU_i$ to the FC, $g_{PT-FC}$ is the channel power gain between the PT and the FC over which averaging is done ($E_{x}(.)$ is the symbol of expectation over the random variable $x$), $P_{ST}$ and $P_{PT}$ are SUs' and the PT's transmission powers respectively \footnote{We assume identical transmission powers for SUs.}, and $N_0$ is the received noise power at the FC. $SU_i$ generates revenue $b_i$ for each successful bit transmission, which is shown in fourth column of Table~\ref{my-label14}. SUs deplete energies while sensing and transmitting decisions, which we correspond to price $\mathcal{T}=Na_s+a_t$, as given in last column of Table~\ref{my-label14}, where $a_s$ is the cost for collecting a sample for sensing and $a_t$ is the cost for transmitting a hard decision. It is to be noted that different economic costs will be associated with $SU_i$ only when $SU_i$ is active. Therefore, the term $\alpha_i$ is multiplied with all economic costs in Table~\ref{my-label14}.

\begin{table}[h]
	\centering
	\caption{Different prices}
	\vspace{1mm}
	\label{my-label14}
	\begin{tabular}{|c|c|c|c|c|}
		\hline
		\begin{tabular}[c]{@{}c@{}}Actual\\ state\end{tabular} & \begin{tabular}[c]{@{}c@{}}Sensing\\ result\end{tabular} & \begin{tabular}[c]{@{}c@{}}$SU_i$ to\\ the FC\end{tabular} & \begin{tabular}[c]{@{}c@{}}$SU_i$'s \\ profit from\\ usage\end{tabular} & \begin{tabular}[c]{@{}c@{}}$SU_i$'s cost\\ for sensing\\ and transmission\end{tabular}\\ \hline
		$H_0$                                                  & $H_0$                                                                                                       & $\alpha_i t_ir_0^ia_i$                                                                                       & $\alpha_i t_ir_0^ib_i$              &  $\alpha_i\mathcal{T}$                                            \\ \hline
		$H_0$                                                  & $H_1$                                                                                                       & -                                                                                                   & -              &  $\alpha_i\mathcal{T}$                                                                       \\ \hline
		$H_1$                                                  & $H_0$                                                                                   & $\alpha_it_ir_1^ia_i$                                                                                      & $\alpha_it_ir_1^ib_i$              &  $\alpha_i\mathcal{T}$                                                           \\ \hline
		$H_1$                                                  & $H_1$                                                                                                       & -                                                                                                   & -              &  $\alpha_i\mathcal{T}$                                                                       \\ \hline
	\end{tabular}
\end{table}

\subsection{Utility function at the FC}\label{utility_FC}
The FC allocates different time durations to SUs to access the licensed spectrum only when the licensed band is sensed as idle. During a frame duration, if $SU_i$ is selected for resource allocation, then $SU_i$ can clear $R_i(P_{fa},k,L)$ bits/sec. from it's buffer, where, 
\begin{align}
R_i(P_{fa},k,L)=P(H_0)\left\{1-P_{FA}(P_{fa},k,L)\right\}r_0^i+
P(H_1)\left\{1-P_{D}(P_{fa},k,L)\right\}r_1^i 
\label{rate1}
\end{align}\noindent
If $SU_i$ pays the amount $a_i$ to the FC for each successfully transmitted bit to the FC, then the utility function at the FC becomes:
\begin{align}
U_{FC}(P_{fa},k,\boldsymbol{t},\boldsymbol{\alpha}) = \sum_{i=1}^M \alpha_iU_{FC}^i(P_{fa},k,L)t_i.
\label{payoff}
\end{align} \noindent
where, $\alpha_i$ is the activity factor of $SU_i$ as has been defined earlier in Section~\ref{sysmod} and $U_{FC}^i(P_{fa},k,L)=R_i(P_{fa},k,L)a_i$.

\subsection{Utility function for $SU_i$}\label{utility_ST_section}
Each SU's utility function consists of two parts, which we discuss in detail as follows:

\begin{itemize}
	\item $SU_i$ pays amount $a_i$ to the FC for each successful bit transmission to the FC; whereas, $SU_i$ generates revenue $b_i$ for each successful bit transmission. Corresponding utility function can be written as:
	\begin{align}
	U_{SU_i}^1 = \alpha_iR_i(P_{fa},k,L) t_i(b_i-a_i).
	\end{align}
	
	\item Each SU losses energies for conducting sensing and transmitting hard decisions, which is represented as
	\begin{align}
	U_{SU_i}^2 = \alpha_i\left(Na_s+a_t\right).
	\end{align}
\end{itemize} 

Effective utility for $SU_i$ becomes:
\begin{align}
U_{SU_i}(P_{fa},k,t_i,\alpha_i) &= U_{SU_i}^1 - U_{SU_i}^2.
\label{utility_ST}
\end{align}

\subsection{System constraints}\label{cons}
We now define constraints which are considered during the resource allocation problem.

\begin{enumerate}
	\item \textit{Positive utility at each SU:} $SU_i$ participates in the cooperative sensing if it's own utility function becomes positive. Therefore, we can write:
	\begin{align}
	U_{SU_i}(P_{fa},k,t_i,\alpha_i) \geq 0
	\label{cons1}
	\end{align}\noindent
	where, $U_{SU_i}(P_{fa},k,t_i,\alpha_i)$ has been defined in Equation~\eqref{utility_ST}.
	
	\item \textit{Buffer data constraint:} We consider data buffer at each SU. The FC allocates time to SUs according to their buffer sizes. Allocating more time to $SU_i$ than what it (i.e., $SU_i$) requires to clear the buffer, becomes a wastage. For this reason, we consider following constraint:
	\begin{align}
	\alpha_i R_i(P_{fa},k,L)t_i \leq B_i
	\label{buffer_cons}
	\end{align}\noindent
	where, $B_i$ denotes the number of bits stored in $SU_i$'s buffer and $R_i(P_{fa},k,L)$ is given in Equation~\eqref{rate1}. 
	
	\item \textit{Total time constraint:} If we consider the total frame duration as $T$, then we can write:
	\begin{align}
	\sum_{i=1}^M \alpha_i t_i \leq T^{'}(L)\label{cons2_2_4}	
	\end{align}\noindent
	where, $T^{'}(L)=T^{'}-\sum_{i=1}^M \alpha_i\tau_r^{'}$ is the effective usable time, $T^{'}=T-\sum_{k=2}^5 \tau_k$ ($\tau_{2-5}$ have been defined in Section~\ref{sysmod}).
	
	\item \textit{Interference constraint at the FC:} We consider an lower bound on the detection probability at the FC to maintain created interference on the primary network under a certain threshold. Mathematically, we can write this constraint as:
	\begin{align}
	P_{D}(P_{fa},k,L) \geq \zeta
	\label{penalty1}
	\end{align}
\end{enumerate}

\section{Optimal resource allocation problem}\label{optimResource}
 We aim to design SUs' and the FC's decision thresholds, SUs' activity factors, and active SUs' allocation time, while maximizing the utility at the FC as given in Equation~\eqref{payoff}. Corresponding optimization problem at the FC is represented as:
 \begin{subequations}
 	\begin{align}
 	P1:\hspace{5mm} \underset{P_{fa},k,\boldsymbol{t},\boldsymbol{\alpha}}{\text{maximize}}\hspace{10mm}  &U_{FC}(P_{fa},k,\boldsymbol{t},\boldsymbol{\alpha}) \\
 	\text{subject to:} \hspace{10mm} &\text{Equations}~\eqref{cons1}-\eqref{penalty1} \nonumber\\
 	& k \leq L. \label{consP1_1}
 	\end{align}
 \end{subequations}\noindent
 The constraint as given in Equation~\eqref{consP1_1}, signifies that number of active SUs should be always greater than the decision threshold at the FC. 
 
 From Equation~\eqref{utility_ST}, it can be observed that for $b_i<a_i$, $SU_i$ never gets positive utility. However, for $b_i>a_i$, $SU_i$'s utility may be positive if $t_i$ is above a lower bound, which we define as:
 \begin{align}
 R_i(P_{fa},k,L)t_i(b_i-a_i) &\geq Na_s+a_t\nonumber\\
 \Rightarrow t_i &\geq T_i^{LB}(P_{fa},k,L)\label{lowerlimit}
 \end{align}\noindent
 where, $T_i^{LB}(P_{fa},k,L)=\frac{Na_s+a_t}{R_i(P_{fa},k,L)\cdot (b_i-a_i)}$. From the buffer constraint as given in Equation~\eqref{buffer_cons}, we get an upper bound on $t_i$ as:
 \begin{align}
 t_i \leq T_i^{UB}(P_{fa},k,L)
 \label{upperlimit}
 \end{align}\noindent
 where, $T_i^{UB}(P_{fa},k,L)=\frac{B_i}{R_i(P_{fa},k,L)}$. It is to be noted that $T_i^{LB}(P_{fa},k,L)$ and $T_i^{UB}(P_{fa},k,L)$ are functions of $P_{fa}$, $k$, and $L$.
 
 From Equation~\eqref{lowerlimit} and \eqref{upperlimit}, we can reqrite optimization problem $P1$ as:
 \begin{subequations}
 	\begin{align}
 	P2:\hspace{5mm} \underset{P_{fa},k,\boldsymbol{t},\boldsymbol{\alpha}}{\text{maximize}}\hspace{5mm}  &U_{FC}(P_{fa},k,\boldsymbol{t},\boldsymbol{\alpha}) \label{objP2}\\
 	\text{subject to:} \hspace{5mm} &T_i^{LB}(P_{fa},k,L) \leq \alpha_it_i \leq T_i^{UB}(P_{fa},k,L) \label{consP2_1}\\
 	& \sum_{i=1}^M \alpha_i t_i \leq T^{'}(L)\label{consP2_2}\\
 	& P_{D}(P_{fa},k,L) \geq \zeta \label{consP2_3}\\
 	& k \leq L, \label{consP2_4}
 	\end{align}
 \end{subequations}\noindent
 It can be observed that the optimization problem $P2$ is a mixed integer programming problem as it consists both discrete variables, i.e., $\boldsymbol{\alpha}$, $k$, and real variables, i.e., $P_{fa}$, $\boldsymbol{t}$. Hence, the optimization problem is not convex anymore. We try to analyse the structure of the optimization problem for given value of $\boldsymbol{\alpha}$, which leads us to following proposition. 
 
 \begin{proposition}\label{prop1}
 	For given $\boldsymbol{\alpha}$, the optimization problem $P2$ is not convex over $P_{fa}$, $k$, and $\boldsymbol{t}$. However, if we fix values for $P_{fa}$ and $k$, then the optimization problem becomes convex over $\boldsymbol{t}$.
 \end{proposition}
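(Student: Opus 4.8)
The plan is to handle the two claims of the proposition in turn, starting with the (easier) negative statement and then establishing the positive one.

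For the non-convexity claim I would give the obstruction that actually matters for the rest of the paper: the objective \eqref{objP2} is $U_{FC}=\sum_{i=1}^M \alpha_i U_{FC}^i(P_{fa},k,L)\,t_i$ with $U_{FC}^i(P_{fa},k,L)=R_i(P_{fa},k,L)\,a_i$, so for every active $SU_i$ the objective contains the product of the real variable $t_i$ with the quantity $R_i(P_{fa},k,L)$, which by \eqref{rate1} together with \eqref{Pf}--\eqref{Pd} is a genuinely nonlinear function of $P_{fa}$ (for fixed $k$ and the $L$ determined by $\boldsymbol{\alpha}$). Writing this term as $c_i\,\phi_i(P_{fa})\,t_i$ with $c_i=\alpha_i a_i>0$ and $\phi_i(P_{fa})=R_i(P_{fa},k,L)$, the $2\times2$ Hessian of $U_{FC}$ with respect to $(P_{fa},t_i)$ has a vanishing $(t_i,t_i)$ entry and a nonzero mixed entry $c_i\phi_i'(P_{fa})$, so its determinant equals $-\bigl(c_i\phi_i'(P_{fa})\bigr)^2<0$ wherever $\phi_i'\neq 0$; an indefinite Hessian means $U_{FC}$ is neither convex nor concave, hence $P2$ with $\boldsymbol{\alpha}$ fixed is not a convex program over $(P_{fa},k,\boldsymbol{t})$. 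As secondary remarks I would note (i) the same bilinear structure re-appears in the constraint set, since $T_i^{UB}(P_{fa},k,L)=B_i/R_i(P_{fa},k,L)$ is not affine in $P_{fa}$, so the bounds \eqref{lowerlimit}--\eqref{upperlimit} already delimit a non-convex region; and (ii) $k$ ranges over the integers, so the feasible set of $P2$ is a union of $k$-slices and is not convex to begin with. Any one of these suffices.

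For the convexity claim I would fix $P_{fa}=\bar P_{fa}$ and $k=\bar k$. Since $\boldsymbol{\alpha}$ is given, $L=\sum_{SU_i\in\mathcal{G}}\alpha_i$ is fixed, so by \eqref{Pf}--\eqref{Pd}, \eqref{r0}--\eqref{r1}, and \eqref{rate1} all of $P_{FA}(\bar P_{fa},\bar k,L)$, $P_D(\bar P_{fa},\bar k,L)$, $r_0^i$, $r_1^i$, $R_i(\bar P_{fa},\bar k,L)$, $U_{FC}^i(\bar P_{fa},\bar k,L)$, $T_i^{LB}(\bar P_{fa},\bar k,L)$, $T_i^{UB}(\bar P_{fa},\bar k,L)$, and $T^{'}(L)$ become constants. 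The objective \eqref{objP2} then reads $\sum_{i=1}^M \bigl(\alpha_i U_{FC}^i\bigr)t_i$, which is affine in $\boldsymbol{t}$; constraint \eqref{consP2_1} becomes the linear box $T_i^{LB}\le\alpha_i t_i\le T_i^{UB}$; constraint \eqref{consP2_2} becomes the single linear inequality $\sum_i\alpha_i t_i\le T^{'}(L)$; and \eqref{consP2_3}, \eqref{consP2_4} contain no $t_i$, so they act only as feasibility tests on $(\bar P_{fa},\bar k)$ and are irrelevant to curvature in $\boldsymbol{t}$. Thus, with $P_{fa}$ and $k$ fixed, $P2$ is maximization of a linear objective over a polyhedron --- a linear program --- and in particular convex in $\boldsymbol{t}$, which is the assertion.

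I expect the only genuinely delicate point to be making ``not convex'' precise: one must exhibit a concrete curvature failure rather than merely observe that the problem ``looks nonlinear.'' I would therefore lead with the Hessian-of-the-objective computation above, because it is the obstruction that both proves the claim and motivates the fix-$P_{fa}$-and-$k$ decomposition used in the algorithm that follows; the remaining verifications (affineness of the surviving terms in $\boldsymbol{t}$) are routine.
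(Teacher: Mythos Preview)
Your proposal is correct, and for the second claim it coincides with the paper's argument (everything freezes to constants, objective and constraints become affine in $\boldsymbol t$, hence a linear program).

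For the first claim you and the paper take genuinely different routes. You exploit the bilinear coupling between $P_{fa}$ and $t_i$: the $2\times 2$ principal Hessian of $U_{FC}$ in $(P_{fa},t_i)$ has zero $(t_i,t_i)$ entry and nonzero cross entry $\alpha_i a_i\,\partial R_i/\partial P_{fa}$, so its determinant is strictly negative and the full Hessian is indefinite. The paper instead works entirely in the $(P_{fa},k)$ plane: it computes the bordered Hessian of $U_{FC}$ with respect to $(P_{fa},k)$, writes out all the first- and second-order partials from \eqref{Pf}--\eqref{Pd}, and then exhibits a concrete numerical instance ($M=k=5$, $\gamma=-7.5$~dB, specific $r_0^i,r_1^i$) where $\det[H]$ changes sign, concluding that $U_{FC}$ is not even quasi-concave in $(P_{fa},k)$ and hence not concave in the larger variable set. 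Your argument is shorter and parameter-free --- it works for any active $SU_i$ at any point with $\partial R_i/\partial P_{fa}\neq 0$ --- and it is precisely the obstruction that motivates freezing $P_{fa}$ in Algorithm~\ref{algo1}. The paper's argument buys a slightly stronger conclusion (failure of quasi-concavity already in the sensing variables alone, before $\boldsymbol t$ enters), at the price of heavier derivative expressions and reliance on a numerical counterexample.
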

 \begin{proof}
We first prove the first statement with the help of Appendix~\ref{non_quasi_concavity}. In Appendix~\ref{non_quasi_concavity}, we prove that the objective function of $P1$ as given in Equation~\eqref{payoff}, is not quasi-concave over $P_{fa}$ and $k$, which means that the objective function is not concave also for $P_{fa}$ and $k$. As the objective function is not concave in lower dimension, we can conclude that the objective function is not concave in higher dimension also, i.e., for $P_{fa}$, $k$, and $\bm{t}$.
		
The objective function and constraints of $P1$ are linear functions of $\bm{t}$. Therefore, we can conclude that for given values for $\bm{\alpha}$, $P_{fa}$ and $k$, the optimization problem $P2$ belongs to convex optimization family.
\end{proof}\noindent
 From Proposition~\ref{prop1}, we can say that $P2$ is a mixed integer and non-convex optimization problem. In Algorithm~\ref{algo1}, we show different steps for solving the optimization problem $P2$. We conduct search operation over $P_{fa}$ and $k$. After fixing $P_{fa}=\bar{P}_{fa}$ and $k=\bar{k}$, we evaluate optimal values for $\boldsymbol{t}$ and $\boldsymbol{\alpha}$, which we discuss in next section. 
 \begin{algorithm}[H]
 	\footnotesize{
 		\nl \KwData{
 			\begin{itemize}
 				\item SUs set: $\mathcal{G}$
 				\item Different parameters' values for $SU_i\in \mathcal{G}$: $a_i,b_i,B_i$
 				\item Other parameters: $N,c_s,a_s,c_t,a_t$
 				\item Grid of SU's false-alarm probabilities: $P_{fa}=\left\{\frac{i}{\mathcal{L}}\right\}_{i=1}^{\mathcal{L}}$
 			\end{itemize}.}
 		\nl \KwResult{Optimal utility at the FC.}
 		\nl \For{$a=1:1:\left|\mathcal{G}\right|$}{
 			$\bar{k}\leftarrow a$\;
 			\nl \For{$b=1:1:\mathcal{L}$}{
 				\nl			$\bar{P}_{fa}\leftarrow P_{fa}(b)$\;
 				\nl			Evaluate $\boldsymbol{t}$ and $\boldsymbol{\alpha}$ for $\bar{P}_{fa}$ and $\bar{k}$ as discussed in Section~\ref{optimsolution}, which are used for evaluating the utility function at the FC\;
 			}
 		}
 		\nl Choose the set of $P_{fa},k,\boldsymbol{t}$, and $\boldsymbol{\alpha}$, for which we get maximum utility at the FC.\		
 		\caption{Evaluation of optimal utility at the FC\label{algo1}}}
 \end{algorithm}\noindent
 For notational convenience, we denote: $R_i(\bar{P}_{fa},\bar{k},L)$ by $R_i(L)$, $T_i^{UB}(\bar{P}_{fa},\bar{k},L)$ by $T_i^{UB}(L)$, $T_i^{LB}(\bar{P}_{fa},\bar{k},L)$ by $T_i^{LB}(L)$, $P_{FA}(\bar{P}_{fa},\bar{k},L)$ by $P_{FA}(L)$, $P_{D}(\bar{P}_{fa},\bar{k},L)$ by $P_{D}(L)$, and $U_{FC}^i(\bar{P}_{fa},\bar{k},L)$ by $U_{FC}^i(L)$. 

 \section{Evaluation of $\boldsymbol{\alpha}$ and $\boldsymbol{t}$ for $\bar{P}_{fa}$ and $\bar{k}$}\label{optimsolution}
 In this section, we discuss about steps for finding out values for $\boldsymbol{\alpha}$ and $\boldsymbol{t}$ for $\bar{P}_{fa}$ and $\bar{k}$. From the optimization problem $P2$ as given in Equation~\eqref{objP2}-\eqref{consP2_4}, we get three different cases based on lower and upper limits of time durations, and effective usable time duration for a given value for $L=\left|\mathcal{G}\right|$ \footnote{$\left|x\right|$ denotes the cardinality of set $x$}:
 \begin{subequations}
 \begin{align}
 &\textbf{Case-1: }\sum_{SU_i\in \mathcal{G}}T_i^{UB}(\left|\mathcal{G}\right|) \leq T^{'}(\left|\mathcal{G}\right|) \label{case1_eq}\\ 
 &\textbf{Case-2: }\sum_{SU_i\in \mathcal{G}}T_i^{LB}(\left|\mathcal{G}\right|) \leq T^{'}(\left|\mathcal{G}\right|) \leq \sum_{SU_i\in \mathcal{G}}T_i^{UB}(\left|\mathcal{G}\right|) \label{case2_eq}\\
 &\textbf{Case-3: }\sum_{SU_i\in \mathcal{G}}T_i^{LB}(\left|\mathcal{G}\right|) > T^{'}(\left|\mathcal{G}\right|) \label{case3_eq}
 \end{align}
 \end{subequations}
 Case-1 represents that summation of upper bounds of time durations for SUs of interest is lesser equal to effective usable time duration. Case-2 means effective usable time duration is in between summation of lower bounds and upper bounds of time durations for SUs of interest. Case-3 means summation of lower bounds of time durations for SUs of interest is greater than effective usable time duration. It can be observed that for Case-1 and 2, the optimization problem $P2$ becomes feasible as all SUs of the set $\mathcal{G}$ can be selected for the resource allocation procedure with the effective available time duration $T^{'}(\left|\mathcal{G}\right|)$. However, for Case-3, the resource allocation problem becomes infeasible as all SUs of the set $\mathcal{G}$ can not be selected anymore. Therefore, there exists some cases (e.g., Case-3) where selection of SUs becomes relevant to make the resource allocation problem feasible. 
 
 We perform an initial-level selection of SUs from their lower and upper limits of time durations. We check whether $SU_i$ satisfies the condition, i.e., $T_i^{UB}(L)\geq T_i^{LB}(L)$ for $L=\left|\mathcal{G}\right|$, or not. If the condition is not satisfied, then $SU_i$ is excluded from the set $\mathcal{G}$ permanently as for any values of $L<\left|\mathcal{G}\right|$, the condition, i.e., $T_i^{LB}(L)<T_i^{UB}(L)$, is not satisfied, which can be verified from Equation~\eqref{lowerlimit} and \eqref{upperlimit}. We denote the reduced set of SUs by:
 \begin{align}
 \mathcal{G}^{'} = \left\{SU_i|T_i^{LB}(\left|\mathcal{G}\right|)<T_i^{UB}(\left|\mathcal{G}\right|),SU_i\in\mathcal{G}\right\}.\label{reducedset}
 \end{align} 
 
 We define a lower bound on the number of active SUs for resource allocation procedure as $L_{LB}$ due to interference constraint (as given in Equation~\eqref{consP2_3}), where:
  \begin{align}
  L_{LB}&=\left\{L|P_{D}(L))\geq \zeta,P_{D}(L-1)< \zeta\right\} \label{min_SUs}
  \end{align}
 Now, we check whether the condition, i.e., $\left|\mathcal{G}^{'}\right|\geq L_{LB}$, is satisfied or not to make the resource allocation problem feasible.  From Equation~\eqref{Pd}, we observe that for fixed values of $\bar{P}_{fa}$ and $\bar{k}$, detection probability at the FC is a monotonically increasing function of number of SUs participating in the detection procedure. Therefore, we get an unique value for $L_{LB}\geq \bar{k}$ from Equation~\eqref{min_SUs}. If we get $\left|\mathcal{G}^{'}\right|<L_{LB}$, then the resource allocation problem becomes infeasible which can be checked from Equation~\eqref{consP2_4} and \eqref{min_SUs}. Our further discussion in this section is for the condition, i.e., $\left|\mathcal{G}^{'}\right|\geq L_{LB}$, which makes the resource allocation problem feasible.  
 
 From our observations, we propose Algorithm~\ref{algo3}, where we show different steps to find out optimal values of $\boldsymbol{\alpha}$ and $\boldsymbol{t}$ for given values of $\bar{P}_{fa}$ and $\bar{k}$. We mention termination steps in Algorithm~\ref{algo3} wherever it is applicable. Following we discuss relevant steps of Algorithm~\ref{algo3} and refer to important proofs which guarantee the optimality of Algorithm~\ref{algo3} for identical cost values, i.e., $a_i=a,b_i=b,\forall{SU_i\in \mathcal{G}^{'}}$:
 
 \begin{enumerate}
 	\item If Case-1 satisfies for available set (i.e., $\mathcal{G}^{'}$), then step-4 is executed, else we go to step-5 of Algorithm~\ref{algo3}. In Proposition~\ref{prop222}, we discuss about the optimality of terminating Algorithm~\ref{algo3} in step-4 for identical cost values, i.e., $a_i=a,b_i=b,\forall{SU_i\in \mathcal{G}^{'}}$.	
 	
 	\item  When $\mathcal{G}^{'}$ does not satisfy Case-1, we go to step-5 of Algorithm~\ref{algo3}, where we check whether number of SUs in $\mathcal{G}^{'}$ is equal to the minimum required SUs, i.e., $L_{LB}$, as evaluated in Equation~\eqref{min_SUs}. If for $\left|\mathcal{G}^{'}\right|=L_{LB}$, in step-6 of Algorithm~\ref{algo3}, the condition, i.e., Case-2 as given in Equation~\eqref{case2_eq}, satisfies, then SUs of $\mathcal{G}^{'}$ are allocated time durations following avariant of water-filling algorithm as shown in Algorithm~\ref{algo2}\footnote{In order to make the resource allocation problem feasible, initially, all SUs are allocated their lower time bounds, which we show in step-4 of Algorithm~\ref{algo2}. The remaining time duration is distributed among SUs to maximize the FC's utility. From the expression as given in Equation~\eqref{payoff}, it can be observed that total utility at the FC is summation of all SUs' payments. Therefore, it is intuitive to give most preference to the SU which gives more payment than other SUs for same time allocation. For that reason, in step-9 of Algorithm~\ref{algo2}, the most preferred SU is tried to be allocated with it's upper time limit. This process continues until the available time duration ends up.}. However, in step-6 of Algorithm~\ref{algo3}, if $\mathcal{G}^{'}$ satisfies Case-3 as given in Equation~\eqref{case3_eq}, then the resource allocation problem becomes infeasible.  
 	
 	\item For $\left|\mathcal{G}^{'}\right|>L_{LB}$, we go to step-7 of Algorithm~\ref{algo2}. Please note that SUs of $\mathcal{G}^{'}$ may satisfy  either condition for Case-1 or Case-2. If Case-2 holds, then we follow step-9 of Algorithm~\ref{algo3}, where we show that SUs are allocated time duration following Algorithm~\ref{algo2}. However, for Case-3, we need to eliminate any SU from $\mathcal{G}^{'}$ to make the resource allocation problem feasible. In Proposition~\ref{prop3}, for identical cost values, i.e., $a_i=a,b_i=b,\forall{SU_i\in \mathcal{G}^{'}}$, we prove it is optimal to eliminate $SU_j$ (as defined in step-10 of Algorithm~\ref{algo3}) from $\mathcal{G}^{'}$ when modified $\mathcal{G}^{'}$ (constructed in step-11 of Algorithm~\ref{algo3}) satisfies condition for either Case-2 or Case-3.
 	
 	\item However, if modified $\mathcal{G}^{'}$ (constructed in step-11 of Algorithm~\ref{algo3}) holds condition for Case-1, then elimination of $SU_j$ as given in step-11 of Algorithm~\ref{algo3}, may not be optimal, which we prove in Proposition~\ref{prop3_1} for identical cost values, i.e., $a_i=a,b_i=b,\forall{SU_i\in \mathcal{G}^{'}}$. In Algorithm~\ref{algo4}, we show further steps when modified $\mathcal{G}^{'}$ (constructed in step-11 of Algorithm~\ref{algo3}) satisfies Case-1. We discuss about steps of Algorithm~\ref{algo3} in item a,b, and c, as given following:
 	
 	\begin{enumerate}
 	
	 	\item In Algorithm~\ref{algo4}, we check if there is any other set of SUs of same cardinality like modified $\mathcal{G}^{'}$ (constructed in step-11 of Algorithm~\ref{algo3}) for which we get better utility at the FC. For that, we exchange $n$ number of SUs between two sets, i.e., $\mathcal{G}_{left}$ and $\mathcal{G}_{Ex}$, where $n\leq \min{\mathcal{G}_{left},\mathcal{G}_{Ex}}$. 
 	
 \begin{figure}[t!]
 \begin{algorithm}[H]
 	\footnotesize
 	\nl \KwData{
 		\begin{itemize}
 			\item Input parameters as described in Algorithm~\ref{algo1}
 			\item Feasible SUs' set: $\mathcal{G}^{'}$
 			\item Null set to store excluded SUs: $\mathcal{G}_{Ex}=\emptyset$
 			\item Other parameter: $L_{LB}$
 		\end{itemize}}
 		\nl \KwResult{$\boldsymbol{\alpha}$ and $\boldsymbol{t}$}
 		\nl \eIf{Case-1 holds for $\mathcal{G}=\mathcal{G}^{'}$ in Equation~\eqref{case1_eq}}{
 			\nl  $\alpha_i=1,t_i=T_i^{UB}(\left|\mathcal{G}^{'}\right|),\forall{SU_i\in \mathcal{G}^{'}}$ \hspace{0.5cm}\textbf{***Termination step***} \;
 		}{
 		\nl \eIf{$\left|\mathcal{G}^{'}\right|=L_{LB}$}{
 			\nl From Equation~\eqref{case2_eq} and \eqref{case3_eq}, check which one holds for $\mathcal{G}=\mathcal{G}^{'}$ and evaluate $\boldsymbol{ \alpha}$, $\boldsymbol{t}$ using Algorithm~\ref{algo2} for Case-2. The resource allocation problem becomes infeasible if Case-3 holds.  \hspace{0.5cm}\textbf{***Termination step***}
 		}{
 		\nl \While{$\left|\mathcal{G}^{'}\right|> L_{LB}$}{	
 			\nl \If{Case-2 holds for $\mathcal{G}=\mathcal{G}^{'}$ in Equation~\eqref{case2_eq}}{	
 				\nl $\alpha_i=1,\forall{SU_i\in \mathcal{G}^{'}}$, evaluate allocated time duration vector (i.e., $\boldsymbol{t}_{\mathcal{G}^{'}}$) from Algorithm~\ref{algo2}, the FC's utility (i.e., $U_{FC}^{\mathcal{G}^{'}}$)   
 			}
 			\nl $SU_j=\arg\min_{SU_i\in \mathcal{G}^{'}} \left\{U_{FC}^i(\left|\mathcal{G}^{'}\right|)\right\}$\;
 			\nl $\mathcal{G}^{'}= \mathcal{G}^{'}\setminus SU_j$\;
 			\nl $\mathcal{G}_{Ex}=\mathcal{G}_{Ex}\cup SU_j$\; 
 			\nl \If{Case-1 holds for $\mathcal{G}=\mathcal{G}^{'}$ in Equation~\eqref{case1_eq}}{
 				\nl    Perform Algorithm~\ref{algo4} and find out $\mathcal{H}^{*}$ from step-22 of Algorithm~\ref{algo4}\;
 				
 				\nl \If{Case-1 holds for $\mathcal{G}=\mathcal{H}^{*}$ in Equation~\eqref{case1_eq}}{
 					\nl \eIf{$U_{FC}^{\mathcal{G}^{'}}>U_{FC}^{\mathcal{H}^{*}}$}{
 						\nl Consider $\boldsymbol{t}_{\mathcal{G}^{'}}$ and $\boldsymbol{\alpha}$ which are received in step-9 of Algorithm~\ref{algo3} \hspace{0.5cm}\textbf{***Termination step***}\;  
 					}{
 					\nl  $\alpha_i=1,t_i=T_i^{UB}(\left|\mathcal{H}^{*}\right|),\forall{SU_i\in \mathcal{H}^{*}}$ \hspace{0.5cm}\textbf{***Termination step***}\; 
 				}
 			}
 			\nl \If{Case-2 holds for $\mathcal{G}=\mathcal{H}^{*}$ in Equation~\eqref{case2_eq}}{
 				\nl Go to step-21 of Algorithm~\ref{algo3} considering $\mathcal{G}^{'}=\mathcal{H}^{*}$ and $\mathcal{G}_{Ex}=\mathcal{G}^{'}\setminus \mathcal{H}^{*}$\;
 			}				
 		}			
 		\nl \eIf{$\left|\mathcal{G}^{'}\right|=L_{LB}$}{
 			\nl From Equation~\eqref{case1_eq}, \eqref{case2_eq}, and \eqref{case3_eq}, check which one holds for $\mathcal{G}=\mathcal{G}^{'}$ and evaluate $\boldsymbol{ \alpha}$, $\boldsymbol{t}$ for either of Case-1 and Case-2. The resource allocation problem becomes infeasible if Case-3 holds.  \hspace{0.5cm}\textbf{***Termination step***} 
 		}{
 		\nl Go to step-7 of Algorithm~\ref{algo3}\;				
 	}										
 }
}
}
\caption{Optimal algorithm to chhose $\boldsymbol{\alpha}$ and $\boldsymbol{t}$ for $\bar{P}_{fa}$ and $\bar{k}$\label{algo3}}
\end{algorithm}
\end{figure}

	 	\item After exchanging $n$ number of SUs between two sets, i.e., $\mathcal{G}_{left}$ and $\mathcal{G}_{Ex}$, we get different sets of SUs which may satisfy any of three cases, i.e., Case-1,2, and 3. We find following interesting facts which helps in reducing complexity of Algorithm~\ref{algo4} and hence Algorithm~\ref{algo3}:
 	
		 	\begin{itemize}
		 		\item After exchanging $n$ number of SUs if sets $\mathcal{G}_1^{'}$ and $\mathcal{G}_2^{'}$ (defined in Table~\ref{my-label24}) satisfy 	Case-1, then all other sets (which are constructed after exchanging $n$ number of SUs) also satisfy Case-1. Under this condition (given in step-4 of Algorithm~\ref{algo4}), we can say that $\mathcal{G}_5^{'}$ (defined in Table~\ref{my-label24}) gives maximum utility at the FC among all possible sets of SUs constructed after exchanging $n$ number of SUs between $\mathcal{G}_{left}$ and $\mathcal{G}_{Ex}$. Corresponding proof is given in Proposition~\ref{prop42}.
 		
		 		\item After exchanging $n$ number of SUs if sets $\mathcal{G}_3^{'}$ and $\mathcal{G}_4^{'}$ (defined in Table~\ref{my-label24}) satisfy Case-2, then all other sets (which are constructed after exchanging $n$ number of SUs) also satisfy Case-2. Under this condition (given in step-7 of Algorithm~\ref{algo4}), we can say that $\mathcal{G}_6^{'}$ (defined in Table~\ref{my-label24}) gives maximum utility at the FC among all possible sets of SUs constructed after exchanging $n$ number of SUs between $\mathcal{G}_{left}$ and $\mathcal{G}_{Ex}$. Corresponding proof is given in Proposition~\ref{prop42}.
 		
		 		\item After exchanging $n$ SUs if sets $\mathcal{G}_3^{'}$ and $\mathcal{G}_4^{'}$ (defined in Table~\ref{my-label24}) satisfy Case-2 or $\mathcal{G}_4^{'}$ satisfies Case-3, then no more than $n$ number SUs are exchanged, which we prove in Proposition~\ref{prop44}.		
		 	\end{itemize}	
 	
	 	Please note that if none of conditions as given in step-4,7, and 10 of Algorithm~\ref{algo4} holds, then we need to check different combinations of SUs as shown in step-12 to 21 in Algorithm~\ref{algo4}.	
 	
	 	\item In step-22 of Algorithm~\ref{algo4}, we find the set of SUs, which is denoted by $\mathcal{H}^{*}$ for which the FC gets maximum utility among all possible sets of SUs considered in Algorithm~\ref{algo4}.
	 	
	 	\end{enumerate} 
 	
 	\item $\mathcal{H}^{*}$ may satisfy either Case-1 or Case-2, which are shown in step-15 and 19 respectively in Algorithm~\ref{algo3}. As the FC can not use the effective time duration fully for Case-1, there is a possibility of getting lower utility at the FC for $\mathcal{H}^{*}$ compared to the utility for $\mathcal{G}^{'}$ in step-9 of Algorithm~\ref{algo3}. Therefore, we put another check condition in step-16 of Algorithm~\ref{algo3}. If the received utility at the FC for $\mathcal{H}^{*}$ is more than what we received for $\mathcal{G}^{'}$ in step-9 of Algorithm~\ref{algo3} earlier, then we consider $\mathcal{H}^{*}$ to be optimal set of SUs as further exclusion of SUs from $\mathcal{H}^{*}$ will reduce the utility at the FC (proof is given in Proposition~\ref{prop222}). Otherwise we consider $\mathcal{G}^{'}$ (which is used in step-9 of Algorithm~\ref{algo3}) to be optimal set of SUs. Please note that Algorithm~\ref{algo3} is stopped when $\mathcal{H}^{*}$ satisfies Case-1.

 	\item However, if $\mathcal{H}^{*}$ satisfies Case-2, then we may get better utility at the FC by eliminating SUs further from $\mathcal{H}^{*}$ (as discussed earlier). Therefore, we return to step-7 of Algorithm~\ref{algo3} when $\mathcal{H}^{*}$ satisfies Case-2. 
\end{enumerate}

\begin{algorithm}[H]
	\footnotesize 
	\nl \KwData{
		\begin{itemize}
			\item Input parameters as described in Algorithm~\ref{algo1}
			\item SUs' sets: $\mathcal{G}^{'}$, $\mathcal{G}_{Ex}$, and $\mathcal{G}_{left}=\left\{\mathcal{G}^{'}\setminus \mathcal{G}_{Ex}\right\}$
			\item Different parameters as defined in Table~\ref{my-label24}
		\end{itemize}}
		\nl $n=1$\;
		\nl \While{$n\leq \min\left\{\left|\mathcal{G}_{left}\right|,\left|\mathcal{G}_{Ex}\right|\right\}$}{
			\nl \If{Case-1 holds for both $\mathcal{G}=\mathcal{G}_1^{'}$ and $\mathcal{G}_2^{'}$}{
				\nl $\alpha_i=1,t_i=T_i^{UB}(\left|\mathcal{G}_{left}\right|),\forall{SU_i\in \mathcal{G}_5^{'}}$; evaluate the FC's utility, i.e., $U_{FC}^{\mathcal{H}}$\;
				\nl Go to step-3 considering $n=n+1$\;
			}
			\nl \If{Case-2 holds for both $\mathcal{G}=\mathcal{G}_3^{'}$ and $\mathcal{G}_4^{'}$}{
				\nl $\alpha_i=1,\forall{SU_i\in \mathcal{G}_{6}^{'}}$, evaluate allocated time duration vector (i.e., $\boldsymbol{t}_{\mathcal{H}}$) from Algorithm~\ref{algo2} and the FC's utility (i.e., $U_{FC}^{\mathcal{H}}$)\;
				\nl Go to step-22\;
			}
			\nl \If{Case-3 holds for $\mathcal{G}=\mathcal{G}_4^{'}$}{
				\nl Go to step-22\;
			}
			\nl \If{None of condition in step-4,7, and 10 satisfies}{
				\nl Construct $\mathcal{N}=\left\{\mathcal{N}_a\right\}$ consisting of different sets of after exchanging $n$ number of SUs between $\mathcal{G}_{left}$ and $\mathcal{G}_{Ex}$, $a=1,..,{\mathcal{G}_{left} \choose n}\times {\mathcal{G}_{Ex} \choose n}$\;
				\nl $a=1$\;
				\nl \While{$a\leq {\mathcal{G}_{left} \choose n}\times {\mathcal{G}_{Ex} \choose n}$}{
					\nl \If{Case-1 holds for $\mathcal{G}=\mathcal{N}_a$ in Equation~\eqref{case1_eq}}{
						\nl $\alpha_i=1,t_i=T_i^{UB}(\left|\mathcal{G}_{left}\right|),\forall{SU_i\in \mathcal{N}_a}$\;
					}
					\nl \If{Case-2 holds for $\mathcal{G}=\mathcal{N}_a$ in Equation~\eqref{case2_eq}}{
						\nl $\alpha_i=1,\forall{SU_i\in \mathcal{N}_a}$, evaluate allocated time duration vector (i.e., $\boldsymbol{t}_{\mathcal{H}}$) from Algorithm~\ref{algo2} and the FC's utility (i.e., $U_{FC}^{\mathcal{H}}$)\;
					}
					\nl $a=a+1$\;
				}
				\nl $n=n+1$\;
			}
		}
		\nl $\mathcal{H}^{*}=\arg \max_{\mathcal{H}}\left\{U_{FC}^{\mathcal{H}}\right\}$\;
		\caption{Algorithm called in step-14 of Algorithm~\ref{algo3}\label{algo4}}
\end{algorithm}
	
\newpage	
\begin{algorithm}[H]
	\footnotesize
	\nl \KwData{
		\begin{itemize}
			\item SUs set: $\mathcal{G}^{'}$
			\item Other parameters: $r_0^i,r_1^i,a_i,b_i,\forall{SU_i\in \mathcal{G}^{'}};a_s,a_t,N,T^{'},\tau_r^{'}$
		\end{itemize}}
			\nl \KwResult{Time allocation vector $\boldsymbol{t}$ for $\mathcal{G}^{'}$}
			\nl Evaluate: $T^{'}(\left|\mathcal{G}^{'}\right|)$ and $T_i^{LB}(\left|\mathcal{G}^{'}\right|),T_i^{UB}(\left|\mathcal{G}^{'}\right|),\forall{SU_i\in \mathcal{G}^{'}}$\; 
			\nl Allocate $t_i=T_i^{LB}(\left|\mathcal{G}^{'}\right|),\forall{SU_i\in \mathcal{G}^{'}}$\;
			\nl $T_{remaining}= T^{'}(\left|\mathcal{G}^{'}\right|)-\sum_{i=1}^{\left|\mathcal{G}^{'}\right|} T_i^{LB}(\left|\mathcal{G}^{'}\right|)$\;
			\nl $C=\left|\mathcal{G}^{'}\right|$\;
			\nl \While{$T_{remaining}>0$ and $\left|\mathcal{G}^{'}\right|\leq C$}{
			\nl     $SU_j=\arg\max_{SU_i\in \mathcal{G}^{'}} \left\{U_{FC}^i(C)\right\}$\;
			\nl     $t_j = \min\left\{T_j^{UB}(C)-T_j^{LB}(C),T_{remaining}\right\}$\;
			\nl		 $T_{remaining}= (T_{remaining}-t_j)$\;		
			\nl		 $\mathcal{G}^{'}=\mathcal{G}^{'} \setminus SU_j$
		}
	\caption{Water-filling variant for time allocation when Case-2 holds\label{algo2}}
\end{algorithm}
		
\begin{table}[h!]
	\centering
	\caption{Different notations used in Algorithm~\ref{algo4}}
	\label{my-label24}
	\resizebox{\columnwidth}{!}{
		\begin{tabular}{|l|l|}
			\hline
			\multicolumn{1}{|c|}{Notation}                                                                                                                                                                                                                                                                                      & \multicolumn{1}{c|}{Description}                                                                                                                                                                                                                                                             \\ \hline
			\begin{tabular}[c]{@{}l@{}}$\boldsymbol{SU}_{A}^{B}=\left\{SU_{A}^{B}(j)\right\}_{j=1}^{\left|A\right|}$\end{tabular}                                                                                                                                   & \begin{tabular}[c]{@{}l@{}}Array of SUs of $A=\left\{\mathcal{G}_{left},\mathcal{G}_{Ex}\right\}$ arranging in descending order of\\
				$B=\left\{UB,LB,B,U\right\}$ where $UB,LB,B,$ and $U$ stand for SUs' upper \\
				time bounds, lower time bounds, buffers, and payments to the FC.\\
				As an example if we consider $A=\mathcal{G}_{left}$ and $B=UB$, then $SU_{\mathcal{G}_{left}}^{UB}(p)$\\
				has higher upper time bound than $SU_{\mathcal{G}_{left}}^{UB}(q)$ for $p<q$, $p,q=1,..,\left|\mathcal{G}_{left}\right|$.\end{tabular}                \\ \hline
			\begin{tabular}[c]{@{}l@{}}$\mathcal{G}_1^{'}$\end{tabular}                                                                                                  & \begin{tabular}[c]{@{}l@{}}Set of SUs after exchanging last $n$ number of SUs of $\boldsymbol{SU}_{\mathcal{G}_{left}}^{UB}$ with first $n$ number\\ of SUs of $\boldsymbol{SU}_{\mathcal{G}_{Ex}}^{UB}$\end{tabular}                                                                                                                                        \\ \hline
			\begin{tabular}[c]{@{}l@{}}$\mathcal{G}_2^{'}$\end{tabular} 
			& \begin{tabular}[c]{@{}l@{}}Set of SUs after exchanging first $n$ number of SUs of $\boldsymbol{SU}_{\mathcal{G}_{left}}^{UB}$ with last $n$ number\\ of SUs of $\boldsymbol{SU}_{\mathcal{G}_{Ex}}^{UB}$\end{tabular}                                                                                                   \\ \hline
			\begin{tabular}[c]{@{}l@{}}$\mathcal{G}_3^{'}$\end{tabular}                                                                                                  & \begin{tabular}[c]{@{}l@{}}Set of SUs after exchangingg last $n$ number of SUs of $\boldsymbol{SU}_{\mathcal{G}_{left}}^{LB}$  with first $n$ number\\ of SUs of $\boldsymbol{SU}_{\mathcal{G}_{Ex}}^{LB}$\end{tabular}                                                                                                        \\ \hline
			\begin{tabular}[c]{@{}l@{}}$\mathcal{G}_4^{'}$\end{tabular} 
			
			& \begin{tabular}[c]{@{}l@{}}Set of SUs after exchanging first $n$ number of SUs of $\boldsymbol{SU}_{\mathcal{G}_{left}}^{LB}$ with last $n$ number\\ of SUs of $\boldsymbol{SU}_{\mathcal{G}_{Ex}}^{LB}$\end{tabular}                                                                                                        \\ \hline
			\begin{tabular}[c]{@{}l@{}}$\mathcal{G}_5^{'}$\end{tabular}                                                                                                    & \begin{tabular}[c]{@{}l@{}}Set of SUs after exchanging last $n$ number of SUs of $\boldsymbol{SU}_{\mathcal{G}_{left}}^{B}$  with first $n$ number\\ of SUs of $\boldsymbol{SU}_{\mathcal{G}_{Ex}}^{B}$\end{tabular}                                                                                                          \\ \hline
			\begin{tabular}[c]{@{}l@{}}$\mathcal{G}_6^{'}$\end{tabular}                                                                                                    & \begin{tabular}[c]{@{}l@{}}Set of SUs after exchanging last $n$ number of SUs of $\boldsymbol{SU}_{\mathcal{G}_{left}}^{U}$ with first $n$ number \\of SUs of $\boldsymbol{SU}_{\mathcal{G}_{Ex}}^{U}$\end{tabular}                                                                                                          \\ \hline
		\end{tabular}
	}
\end{table}

\subsection{Some useful propositions}
In this section, we discuss about some important proofs based on which we conclude that our proposed algorithm, i.e., Algorithm~\ref{algo3}, is optimal.

\begin{proposition}\label{prop222}
	It is optimal to consider $\alpha_i=1,t_i=T_i^{UB}(\left|\mathcal{G}^{'}\right|),\forall{SU_i\in \mathcal{G}^{'}}$, under Case-1.
\end{proposition}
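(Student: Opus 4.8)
The plan is to reduce the claim to a one-line comparison: I will produce an upper bound on the objective of $P2$ (for the fixed values $\bar{P}_{fa},\bar{k}$) that depends neither on which SUs are active nor on $\boldsymbol{t}$, and then verify that the proposed point is feasible and attains that bound, whence optimality is immediate.

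First I would isolate the structural fact behind the bound. The objective \eqref{payoff} is linear in $\boldsymbol{t}$ with coefficients $\alpha_i U_{FC}^i(L)=\alpha_i R_i(L)a_i\ge 0$, while the buffer constraint \eqref{buffer_cons} reads exactly $\alpha_i R_i(L)t_i\le B_i$. Hence for \emph{any} feasible $(\boldsymbol{\alpha},\boldsymbol{t})$ of $P2$, with active set $\mathcal{S}=\{SU_i:\alpha_i=1\}$ and $L=|\mathcal{S}|$, the contribution of each $SU_i$ to $U_{FC}$ obeys $\alpha_i U_{FC}^i(L)t_i=a_i\bigl(\alpha_i R_i(L)t_i\bigr)\le a_iB_i$. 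Moreover the initial-level selection forces $\mathcal{S}\subseteq\mathcal{G}^{'}$: for $SU_i\notin\mathcal{G}^{'}$ one has $T_i^{LB}(L)\ge T_i^{UB}(L)$ at every $L\le|\mathcal{G}|$ (the fact verified from \eqref{lowerlimit}--\eqref{upperlimit} around the definition \eqref{reducedset}), so \eqref{consP2_1} is violated whenever $\alpha_i=1$. Summing the per-SU bound and using $a_i,B_i>0$ then gives
\[
U_{FC}\le\sum_{SU_i\in\mathcal{S}}a_iB_i\le\sum_{SU_i\in\mathcal{G}^{'}}a_iB_i,
\]
an upper bound free of $L$ and of $\boldsymbol{t}$.

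Next I would verify that the candidate $\alpha_i=1,\ t_i=T_i^{UB}(|\mathcal{G}^{'}|)$ for all $SU_i\in\mathcal{G}^{'}$ is feasible for $P2$ and attains this bound. Feasibility: \eqref{consP2_1} holds since $T_i^{LB}(|\mathcal{G}^{'}|)<T_i^{UB}(|\mathcal{G}^{'}|)$ by the definition \eqref{reducedset} of $\mathcal{G}^{'}$; \eqref{consP2_2} is precisely Case-1, i.e.\ \eqref{case1_eq}, applied to $\mathcal{G}^{'}$; \eqref{consP2_3} holds because $|\mathcal{G}^{'}|\ge L_{LB}$ together with monotonicity of $P_{D}(L)$ in $L$ (the property used to define $L_{LB}$ in \eqref{min_SUs}) gives $P_{D}(|\mathcal{G}^{'}|)\ge\zeta$; and \eqref{consP2_4} holds since $\bar{k}\le L_{LB}\le|\mathcal{G}^{'}|$. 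The objective value at the candidate is $\sum_{SU_i\in\mathcal{G}^{'}}R_i(|\mathcal{G}^{'}|)a_iT_i^{UB}(|\mathcal{G}^{'}|)=\sum_{SU_i\in\mathcal{G}^{'}}a_iB_i$, because $T_i^{UB}(L)=B_i/R_i(L)$. This equals the upper bound, so the candidate is optimal.

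The only genuinely delicate point is ruling out that \emph{dropping} some SUs could help: passing to a smaller active set changes $R_i(L)$, $T_i^{UB}(L)$ and the usable time $T^{'}(L)$ all at once, so a direct case-by-case comparison would be messy. The argument above avoids this entirely by observing that the buffer constraint pins each SU's objective contribution below the active-set-independent ceiling $a_iB_i$; the only properties actually needed are $R_i(L)>0$ (so $T_i^{UB}(L)$ is finite and the ceiling is attained at $t_i=T_i^{UB}(L)$) and $a_i,B_i>0$. What does require real checking is feasibility of the candidate — chiefly \eqref{consP2_3} via monotonicity of $P_{D}(L)$, and the observation that Case-1 is exactly the statement that the aggregate time suffices to set every $t_i$ to its upper bound simultaneously.
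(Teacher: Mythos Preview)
Your proof is correct and takes a genuinely different, cleaner route than the paper. The paper argues by \emph{elimination}: it first evaluates $U_{FC}'=\sum_{SU_i\in\mathcal{G}'}a_iB_i$ at the candidate point, then removes an arbitrary $SU_j$ and tracks what happens to the problem data --- showing that $R_i$ increases, both time bounds shrink, and $T'(L)$ grows (Equations~\eqref{lwr_lim}--\eqref{mod_eff}), so the reduced set $\mathcal{G}'\setminus\{SU_j\}$ still satisfies Case-1 and yields $U_{FC}''=\sum_{i\neq j}a_iB_i<U_{FC}'$; by implicit induction over removals, every proper active subset is worse. You instead produce a \emph{universal upper bound}: the buffer constraint \eqref{buffer_cons} caps each active SU's contribution at $a_iB_i$ independently of $L$, so any feasible point satisfies $U_{FC}\le\sum_{SU_i\in\mathcal{G}'}a_iB_i$, and the candidate attains it. Your route avoids the monotonicity analysis of $P_{FA},P_D,T_i^{LB},T_i^{UB},T'$ in $L$ altogether, and it also makes the feasibility check for \eqref{consP2_3} explicit via $|\mathcal{G}'|\ge L_{LB}$, which the paper's proof leaves tacit. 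What the paper's route buys is that the monotonicity relations it establishes here are reused downstream in Propositions~\ref{prop3} and~\ref{prop3_1}, so for the authors that machinery is not wasted.
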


\begin{proof}
	If we consider $\alpha_i=1,\forall{SU_i\in \mathcal{G}^{'}}$, then the utility at the FC becomes:
	\begin{align}
	U_{FC}^{'} = \sum_{i=1}^{\left|\mathcal{G}^{'}\right|} U_{FC}^i(\left|\mathcal{G}^{'}\right|) T_i^{UB}(\left|\mathcal{G}^{'}\right|) \stackrel{(a)}= \sum_{i=1}^{\left|\mathcal{G}^{'}\right|}a_i B_i
	\label{utility_original}
	\end{align}\noindent
	where $(a):T_i^{UB}(\left|\mathcal{G}^{'}\right|)=B_i/R_i(\left|\mathcal{G}^{'}\right|)$, which can be verified from Equation~\eqref{upperlimit} .
	
	Now, we evaluate the utility at the FC after eliminating $SU_j\in \mathcal{G}^{'}$. After elimination, we get $L=\left|\mathcal{G}^{'}\right|-1$. From Equation~\eqref{Pf} and \eqref{Pd}, we observe that for $L=\left|\mathcal{G}^{'}\right|-1$, both false-alarm and detection probabilities reduce at the FC, which improves rate $R_l$ for $SU_l\in \left\{\mathcal{G}^{'}\setminus SU_j\right\}$ as can be verified from Equation~\eqref{rate1}. As lower and upper time bounds of $SU_l$ depend on the rate, from Equation~\eqref{lowerlimit} and \eqref{upperlimit}, we get following relations:
	\begin{subequations}
		\begin{align}
		T_{l}^{UB}(\left|\mathcal{G}^{'}\right|-1) &< T_{l}^{UB}(\left|\mathcal{G}^{'}\right|) \label{lwr_lim}\\
		T_{l}^{LB}(\left|\mathcal{G}^{'}\right|-1) &< T_{l}^{LB}(\left|\mathcal{G}^{'}\right|) \label{upp_lim}
		\end{align}
	\end{subequations}
	Moreover, we observe that effective time duration as given in Equation~\eqref{cons2_2_4}, increases, i.e.,
	\begin{align}
	T^{'}(\left|\mathcal{G}^{'}\right|-1)>T^{'}(\left|\mathcal{G}^{'}\right|)
	\label{mod_eff}
	\end{align}
	Therefore, from Equation~\eqref{lwr_lim}, \eqref{upp_lim}, and \eqref{mod_eff}, we can conclude that if the set, i.e., $\mathcal{G}^{'}$, satisfies the condition for Case-1, then the set, i.e., $\left\{\mathcal{G}^{'}\setminus SU_j\right\}$, will also satisfy the condition for Case-1. We can write the utility at the FC for $\left\{\mathcal{G}^{'}\setminus SU_j\right\}$ as:  
	\begin{align}
	U_{FC}^{''} = \sum_{i=1,i\neq j}^{\left|\mathcal{G}^{'}\right|} a_iB_i
	\label{utility_first}
	\end{align}
	From Equation~\eqref{utility_original} and \eqref{utility_first}, we observe that $U_{FC}^{''}<U_{FC}^{'}$. Hence, after we can conclude that after elimination of any SU from the set $\mathcal{G}^{'}$ reduces the utility at the FC. Therefore, it is optimal to select all SUs of the set $\mathcal{G}^{'}$, i.e, $\alpha_i,\forall{SU_i\in \mathcal{G}^{'}}$.	
\end{proof}

We get  $R_i(\left|\mathcal{G}^{'}\right|)>R_j(\left|\mathcal{G}^{'}\right|)$ when $g_{SU_i-FC}>g_{SU_j-FC}$, which can be verified from Equation~\eqref{rate1}. Therefore, for identical cost values of SUs, i.e., $a_i=a$ and $b_i=b$, $\forall{SU_i\in \mathcal{G}^{'}}$, from Equation~\eqref{lowerlimit}, we can write:
\begin{align}
T_i^{LB}(\left|\mathcal{G}^{'}\right|)<T_j^{LB}(\left|\mathcal{G}^{'}\right|)
\label{diff_relation}
\end{align}
We use the realtion as given in Equation~\eqref{diff_relation} in our further propositions. In Proposition~\ref{prop3} and \ref{prop3_1}, we prove two important steps of Algorithm~\ref{algo3} for identical cost values of SUs. We use two different sets for SUs, i.e., 
\begin{subequations}
	\begin{align}
	\mathcal{G}_1^{''}=\left\{\mathcal{G}^{'}\setminus SU_j\right\} \label{eq1111} \\
	\mathcal{G}_2^{''}=\left\{\mathcal{G}^{'}\setminus SU_l\right\} \label{eq1112}
	\end{align}
\end{subequations}
where $SU_l\in \mathcal{G}_1^{''}$ and $SU_j$ has been defined in step-10 of Algorithm~\ref{algo3} as:
\begin{align}
 SU_j=\arg\min_{SU_i\in \mathcal{G}^{'}} \left\{U_{FC}^i(\left|\mathcal{G}^{'}\right|)\right\}
 \label{elimination_SU}
\end{align}

\begin{proposition}\label{prop3}
	For identical cost values, i.e., $a_i=a$ and $b_i=b$, $\forall{SU_i\in \mathcal{G}^{'}}$, if $\mathcal{G}^{'}$ satisfies Case-2 (Case-3), then it is optimal to eliminate $SU_j$ (as given in Equation~\eqref{elimination_SU} above) when the modified set, i.e., $\mathcal{G}_1^{''}$, satisfies Case-2 (Case-2 or 3), as:
	\begin{enumerate}
		\item The FC gets more utility when $SU_j$ is eliminated than the case when $SU_l$ (defined below Equation~\eqref{eq1112}) is eliminated.
		\item The FC gets more utility for $\mathcal{G}_1^{''}$ (when satisfies condition for Case-2) than $\mathcal{G}^{'}$.
	\end{enumerate}
\end{proposition}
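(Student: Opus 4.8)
The plan is to begin with a change of variables that trivialises the ``floor'' part of the objective. Since $a_i=a$ and $b_i=b$ for all $SU_i\in\mathcal{G}'$, the FC utility of an active set $\mathcal{S}$ is $U_{FC}=a\sum_{SU_i\in\mathcal{S}}R_i(|\mathcal{S}|)t_i$, i.e.\ $a$ times the total number of bits cleared, and the constraints read $\kappa\le R_i(|\mathcal{S}|)t_i\le B_i$, where $\kappa:=\frac{Na_s+a_t}{b-a}=R_i(|\mathcal{S}|)T_i^{LB}(|\mathcal{S}|)$ is the same for every $i$ and every $|\mathcal{S}|$, together with $\sum_i t_i\le T'(|\mathcal{S}|)$. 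Two facts are used throughout. First, a \emph{cancellation}: for a set of cardinality $m$, $\sum_i U_{FC}^i(m)T_i^{LB}(m)=m\,a\kappa$, so the utility obtained by allocating every SU only its lower bound depends on the set only through its size; hence for two active sets of equal size only the water-filling ``top-up'' of Algorithm~\ref{algo2} matters. Second, \emph{$SU_j$ is the slowest}: $U_{FC}^i=aR_i$, and by \eqref{r0}--\eqref{r1} and \eqref{rate1} each $R_i(L)$ is increasing in $g_{SU_i-FC}$ for every $L$, so $SU_j=\arg\min_i U_{FC}^i$ has the smallest rate at every cardinality and, by \eqref{diff_relation}, the largest lower bound $T_j^{LB}(\cdot)$. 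I also reuse, verbatim, the monotonicities from the proof of Proposition~\ref{prop222}: removing one SU strictly raises every surviving $R_i$, strictly lowers every $T_i^{UB},T_i^{LB}$, and raises the effective time, with $T'(|\mathcal{G}'|-1)=T'(|\mathcal{G}'|)+\tau_r'$. Finally, $SU_i\in\mathcal{G}'\iff B_i>\kappa$, so $\Delta_i(m):=T_i^{UB}(m)-T_i^{LB}(m)>0$ for every $SU_i\in\mathcal{G}'$ and every $m$.

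For part 2, if $\mathcal{G}'$ is Case-3 it is infeasible and the claim is vacuous, so let $\mathcal{G}'$ be Case-2 with optimal allocation $t_i^{\mathcal{G}'}$, clearing $b_i^{\mathcal{G}'}:=R_i(|\mathcal{G}'|)t_i^{\mathcal{G}'}\in[\kappa,B_i]$ bits and using all of $T'(|\mathcal{G}'|)$. I exhibit a feasible allocation for $\mathcal{G}_1^{''}=\mathcal{G}'\setminus SU_j$: put $\tilde t_i:=\min\{t_i^{\mathcal{G}'},\,T_i^{UB}(|\mathcal{G}'|-1)\}$. Because dropping one in cardinality raises $R_i$, each surviving SU already clears $\min\{R_i(|\mathcal{G}'|-1)t_i^{\mathcal{G}'},B_i\}\ge b_i^{\mathcal{G}'}$ bits, while $\sum_{i\in\mathcal{G}_1^{''}}\tilde t_i\le\sum_{i\in\mathcal{G}_1^{''}}t_i^{\mathcal{G}'}<T'(|\mathcal{G}'|)<T'(|\mathcal{G}'|-1)$, so the spare time is $S_0\ge\tau_r'+t_j^{\mathcal{G}'}>0$. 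Since $\mathcal{G}_1^{''}$ is Case-2 (the Case-1 boundary being treated as Case-1 in Algorithm~\ref{algo3}), a short computation shows not all $\tilde t_i$ sit at their upper bounds and $S_0$ does not exceed the remaining room, so the whole of $S_0$ can be poured into the surviving SUs; as each of them has rate at least $R_j(|\mathcal{G}'|-1)>R_j(|\mathcal{G}'|)$, this adds more than $R_j(|\mathcal{G}'|)(\tau_r'+t_j^{\mathcal{G}'})>R_j(|\mathcal{G}'|)t_j^{\mathcal{G}'}=b_j^{\mathcal{G}'}$ extra bits. Hence this feasible allocation clears strictly more than $\sum_{i\in\mathcal{G}'}b_i^{\mathcal{G}'}$ bits, the water-filling optimum for $\mathcal{G}_1^{''}$ does at least as well, and $U_{FC}^{\mathcal{G}_1^{''}}>U_{FC}^{\mathcal{G}'}$.

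For part 1, $\mathcal{G}_1^{''}=\mathcal{G}'\setminus SU_j$ and $\mathcal{G}_2^{''}=\mathcal{G}'\setminus SU_l$ both have size $m=|\mathcal{G}'|-1$, so by the cancellation their floor utilities agree and I only compare the top-ups. If $\mathcal{G}_2^{''}$ is Case-3 it is infeasible (handled by the recursion below). If $\mathcal{G}_2^{''}$ is Case-2, its water-filling serves SUs in decreasing-rate order, and $SU_j$, being slowest in $\mathcal{G}_2^{''}$, is served last, so its extra allotment satisfies $x_j^*\le\big(\rho_2-\sum_{SU_i\in C}\Delta_i(m)\big)_+$ with $C:=\mathcal{G}'\setminus\{SU_j,SU_l\}$ and $\rho_2$ the spare time of $\mathcal{G}_2^{''}$; combining $\rho_1\ge\rho_2$ (since $T_j^{LB}(m)\ge T_l^{LB}(m)$) with the Case-2 bound $\rho_1\le\sum_{SU_i\in C}\Delta_i(m)+\Delta_l(m)$ for $\mathcal{G}_1^{''}$ yields $x_j^*\le\Delta_l(m)$. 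Thus copying $\mathcal{G}_2^{''}$'s allotment on $C$ and moving the amount $x_j^*$ onto $SU_l$ is feasible for $\mathcal{G}_1^{''}$, and since $R_l(m)>R_j(m)$ it clears at least as many bits as $\mathcal{G}_2^{''}$'s optimum; the $\mathcal{G}_1^{''}$ optimum beats it, so $U_{FC}^{\mathcal{G}_1^{''}}\ge U_{FC}^{\mathcal{G}_2^{''}}$. If instead $\mathcal{G}_2^{''}$ is Case-1 (possible, given $\mathcal{G}_1^{''}$ Case-2, only when $\sum_{SU_i\in C}T_i^{UB}(m)<T'(m)$ and $T_j^{UB}(m)\le T_l^{UB}(m)$), then $\mathcal{G}_2^{''}$ clears $\sum_{SU_i\in\mathcal{G}_2^{''}}B_i$; for $\mathcal{G}_1^{''}$ I clear every buffer in $C$ and give $SU_l$ the remainder $t_l=T'(m)-\sum_{SU_i\in C}T_i^{UB}(m)\in[T_j^{UB}(m),T_l^{UB}(m)]$, so $SU_l$ alone clears $R_l(m)t_l\ge R_l(m)T_j^{UB}(m)=B_j\,R_l(m)/R_j(m)>B_j$, and again $\mathcal{G}_1^{''}$ out-clears $\mathcal{G}_2^{''}$.

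The remaining situation is when $\mathcal{G}'$ (and possibly $\mathcal{G}_1^{''},\mathcal{G}_2^{''}$) are Case-3: ``the utility obtained by eliminating $SU_l$'' then only acquires meaning after enough further eliminations to restore feasibility, and the claim is proved by induction on the number of eliminations needed, with the Case-1/Case-2 arguments above as base cases and the inductive step relying on the fact that $SU_j$ remains the slowest member of the shrinking set while every cardinality-monotonicity survives the removal of a slowest SU. Knitting these recursion chains together cleanly --- that is, showing that ``remove the current $\arg\min$, then recurse optimally'' dominates ``remove any other SU, then recurse optimally'' for every attainable cardinality --- is the main obstacle; everything else is the repeated move ``exhibit a feasible allocation and count the cleared bits,'' which is routine once the cancellation identity and the cardinality-monotonicities of Proposition~\ref{prop222} are in hand.
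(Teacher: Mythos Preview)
Your proof is essentially the paper's argument: both hinge on the cancellation $R_iT_i^{LB}=\kappa$ (so the ``floor'' utility depends only on cardinality), on the observation that $SU_j=\arg\min_iU_{FC}^i$ has the smallest rate and hence the largest $T_j^{LB}$, and on the greedy structure of Algorithm~\ref{algo2}. The paper likewise computes $T_{rest}'-T_{rest}''=T_j^{LB}-T_l^{LB}>0$ and uses the cancellation to reduce $U_{FC}'-U_{FC}''$ to a comparison of top-ups; for statement~2 it expands the difference and invokes $R_i(|\mathcal{G}'|-1)>R_i(|\mathcal{G}'|)$ together with the time-budget identity.

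Where you differ is in \emph{execution}: instead of bounding the difference of the two water-filling optima directly, you exhibit an explicit feasible allocation for $\mathcal{G}_1''$ (copy the rival's allotment on the common SUs, shift $x_j^*$ onto $SU_l$ for part~1; keep $\tilde t_i=\min\{t_i^{\mathcal{G}'},T_i^{UB}\}$ and pour the slack for part~2) and then count bits. This is a cleaner route, and it actually sidesteps a weakness in the paper's proof: the paper asserts $t_i'\ge t_i''$ for every $SU_i\in\mathcal{G}_1''\cap\mathcal{G}_2''$ and $t_l'\ge t_j''$ as consequences of $T_{rest}'>T_{rest}''$, but when $SU_l$ is not the slowest member of $\mathcal{G}_1''$ it can be served before some $SU_i\in C$, and then $t_i'<t_i''$ is possible; your transport argument avoids this altogether. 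You also treat the sub-case ``$\mathcal{G}_2''$ is Case-1'' explicitly, which the paper does not.

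One remark: your closing induction for Case-3 is more machinery than the proposition asks for. The statement only concerns a \emph{single} elimination step, and the paper disposes of Case-3 in two lines by noting that removing $SU_j$ frees the most time (hence is the best move toward feasibility) and that once $\mathcal{G}_1''$ reaches Case-2 the earlier comparison applies verbatim. Your recursion is not wrong, just unnecessary here.
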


\begin{proof}
We first prove the first statement of this proposition. Let us first consider $\mathcal{G}^{'}$ satisfies Case-2 and prove the first statement of the proposition. If we can show that the utility at the FC for $\mathcal{G}_1^{''}$ is greater than $\mathcal{G}_2^{''}$, then the the proposition is proved. After allotting lower time durations to SUs of $\mathcal{G}_1^{''}$ and $\mathcal{G}_2^{''}$, remaining time durations can be evaluated as respectively: 
	\begin{subequations}
		\begin{align}
		T_{rest}^{'} &= T^{'}(\left|\mathcal{G}_1^{''}\right|)-\sum_{SU_i\in \mathcal{G}_1^{''}} T_{i}^{LB}(\left|\mathcal{G}_1^{''}\right|)\label{rest_time_1}\\
		T_{rest}^{''} &= T^{'}(\left|\mathcal{G}_2^{''}\right|)-\sum_{SU_i\in \mathcal{G}_2^{''}} T_{i}^{LB}(\left|\mathcal{G}_2^{''}\right|)\label{rest_time_2}
		\end{align} 
	\end{subequations}
As $\left|\mathcal{G}_1^{''}\right|=\left|\mathcal{G}_2^{''}\right|=\left|\mathcal{G}^{'}\right|-1$, we get $T^{'}(\left|\mathcal{G}_1^{''}\right|)=T^{'}(\left|\mathcal{G}_2^{''}\right|)$ and can write $T_{rest}^{'}-T_{rest}^{''}$ as
	\begin{align}
		T_{j}^{LB}(\left|\mathcal{G}^{'}\right|-1)-T_{l}^{LB}(\left|\mathcal{G}^{'}\right|-1)\label{relation}
	\end{align} 
As $R_j(\left|\mathcal{G}^{'}\right|-1)<R_l(\left|\mathcal{G}^{'}\right|-1)$, from Equation~\eqref{diff_relation}, we get:
	\begin{align}
		T_{j}^{LB}(\left|\mathcal{G}^{'}\right|-1)-T_{l}^{LB}(\left|\mathcal{G}^{'}\right|-1)&>0\nonumber\\
		\Rightarrow T_{rest}^{'}&> T_{rest}^{''}
		\label{relation}
	\end{align}
For $\mathcal{G}_1^{''}$ and $\mathcal{G}_2^{''}$, utilities at the FC are respectively:
	\begin{subequations}
		\begin{align}
			U_{FC}^{'}&= \sum_{SU_{i}\in \mathcal{G}_1^{''}} R_i(\left|\mathcal{G}^{'}\right|-1)a[T_{i}^{LB}(\left|\mathcal{G}^{'}\right|-1)+t_i^{'}]\label{payoff_1}\\
			U_{FC}^{''}&= \sum_{SU_{i}\in \mathcal{G}_2^{''}} R_i(\left|\mathcal{G}^{'}\right|-1)a[T_{i}^{LB}(\left|\mathcal{G}^{'}\right|-1)+t_i^{''}]\label{payoff_2}
		\end{align} 
	\end{subequations}		
$t_i^{'}(\geq 0)$ and $t_i^{''}(\geq 0)$ are amount of allocated time duration to $SU_{i}\in \mathcal{G}_1^{''}$ and $SU_{i}\in \mathcal{G}_2^{''}$ respectively after allocating their lower time bounds. We can evaluate:
	\begin{align}
		&U_{FC}^{'}-U_{FC}^{''} \nonumber\\
		&\stackrel{(a)}= \sum_{SU_{i}\in \mathcal{G}_1^{''}} R_i(\left|\mathcal{G}^{'}\right|-1)at_i^{'} - \sum_{SU_{i}\in \mathcal{G}_2^{''}} R_i(\left|\mathcal{G}^{'}\right|-1)at_i^{''} \nonumber\\
		&= \sum_{SU_{i}\in \mathcal{G}_1^{''}\cap \mathcal{G}_2^{''}} R_i(\left|\mathcal{G}^{'}\right|-1)a[t_i^{'} - t_i^{''}]
		+a[R_l(\left|\mathcal{G}^{'}\right|-1)t_l^{'}-R_j(\left|\mathcal{G}^{'}\right|-1)t_j^{''}]
		\label{diif_uti}
	\end{align}
where $(a):R_i(\left|\mathcal{G}^{'}\right|-1)aT_{i}^{LB}(\left|\mathcal{G}^{'}\right|-1)=a(Na_s+a_t)/(b-a), SU_i\in \mathcal{G}_1^{''}$ or $\mathcal{G}_2^{''}$, which can be checked from Equation~\eqref{lowerlimit}. 
		
As $T_{rest}^{'}>T_{rest}^{''}$, from the water filling algorithm as given in Algorithm~\ref{algo2}, we get following two conditions:
	\begin{subequations}
	 \begin{align}
		 t_i^{'}&\geq t_i^{''};SU_{i}\in \mathcal{G}_1^{''}\cap \mathcal{G}_2^{''} \label{eq2132}\\
		 t_l^{'}&\geq t_j^{''} \label{eq3212}
	 \end{align}
	\end{subequations}
From Equation~\eqref{eq2132}, we can say:
	\begin{align}
		 \sum_{SU_{i}\in \mathcal{G}_1^{''}\cap \mathcal{G}_2^{''}} R_i(\left|\mathcal{G}^{'}\right|-1)a[t_i^{'} - t_i^{''}]\geq 0
		 \label{cond3245}
	\end{align}
Whereas, from Equation~\eqref{eq3212} and the relation, i.e., $R_l(\left|\mathcal{G}^{'}\right|-1)>R_j(\left|\mathcal{G}^{'}\right|-1)$, we get:
	\begin{align}
		a[R_l(\left|\mathcal{G}^{'}\right|-1)t_l^{'}-R_j(\left|\mathcal{G}^{'}\right|-1)t_j^{''}]>0
		\label{cond4562}
	\end{align}
		
Hence from Equation~\eqref{cond3245}, \eqref{cond4562}, and \eqref{diif_uti}, we get $U_{FC}^{'}>U_{FC}^{''}$, which means exclusion of $SU_j$ is optimal when $\mathcal{G}^{'}$ satisfies condition for Case-2.
		
Now, let us consider that $\mathcal{G}^{'}$ satisfies Case-3. We have to prove that it is optimal to exclude $SU_j$ rather $SU_l$ from $\mathcal{G}^{'}$, when $\mathcal{G}_1^{''}$ satisfies Case-3. From Equation~\eqref{relation}, it can be observed that exclusion of $SU_j$ frees up more time duration compared to exclusion of $SU_l$. Therefore exclusion of $SU_j$ may help to make the resource allocation problem feasible. Now, let us consider that $\mathcal{G}_1^{''}$ satisfies Case-2, for which we can follow same steps given in this proposition to show that $\mathcal{G}_1^{''}$ gives more utility at the FC than $\mathcal{G}_2^{''}$.
		
Now we prove second statement as given in this proposition, i.e., the FC gets more utility for $\mathcal{G}_1^{''}$ (when satisfies condition for Case-2) than $\mathcal{G}^{'}$. For $\mathcal{G}_1^{''}$, we can write $\left|\mathcal{G}_1^{''}\right|=\left|\mathcal{G}^{'}\right|-1$ and moreover: $P_{FA}(\left|\mathcal{G}^{'}\right|-1)<P_{FA}(\left|\mathcal{G}^{'}\right|)$, $P_{D}(\left|\mathcal{G}^{'}\right|-1)<P_{D}(\left|\mathcal{G}^{'}\right|)$. Therefore, from Equation~\eqref{rate1}, we get:
	\begin{align}
			R_i(\left|\mathcal{G}^{'}\right|-1) &> R_i(\left|\mathcal{G}^{'}\right|) \label{124}
	\end{align}
where $SU_i\in \mathcal{G}^{'}$. We can write the difference between utilities at the FC for $\mathcal{G}_1^{''}$ and $\mathcal{G}^{'}$ as:
	\begin{align}
		 &\sum_{SU_i\in \mathcal{G}_1^{''}} R_i(\left|\mathcal{G}^{'}\right|-1)a[T_{i}^{LB}(\left|\mathcal{G}^{'}\right|-1)+t_i^{'}] - \sum_{SU_i\in \mathcal{G}^{'}} R_i(\left|\mathcal{G}^{'}\right|)a[T_{i}^{LB}(\left|\mathcal{G}^{'}\right|)+t_i] 
		 \label{123eq}
	\end{align}
where $t_i^{'} (\geq 0)$ has been defined earlier in this proposition and $t_i(\geq 0)$ is the amount of allotted time duration to $SU_i\in \mathcal{G}^{'}$ after allotting it's lower time bound. As $R_i(\left|\mathcal{G}^{'}\right|-1)\cdot T_{i}^{LB}(\left|\mathcal{G}^{'}\right|-1)=R_i(\left|\mathcal{G}^{'}\right|)\cdot T_{i}^{LB}(\left|\mathcal{G}^{'}\right|)=(Na_s+a_t)/(b-a), SU_i\in \mathcal{G}_1^{''}$ or $\mathcal{G}^{'}$, which can be checked from Equation~\eqref{lowerlimit}, we can further write Equation~\eqref{123eq} as:
	\begin{align}
		 &\sum_{SU_i\in \mathcal{G}_1^{''}\cap \mathcal{G}^{'}} a\left[R_i(\left|\mathcal{G}^{'}\right|-1)t_i^{'}-R_i(\left|\mathcal{G}^{'}\right|)t_i\right]-R_j(\left|\mathcal{G}^{'}\right|)a\left[T_j^{LB}(\left|\mathcal{G}^{'}\right|)+t_j\right]\nonumber\\
		 &=\sum_{SU_i\in \mathcal{G}_1^{''}\cap \mathcal{G}^{'}} a\left[R_i(\left|\mathcal{G}^{'}\right|-1)-R_i(\left|\mathcal{G}^{'}\right|)\right]t_i+\sum_{SU_i\in \mathcal{G}_1^{''}\cap \mathcal{G}^{'}}aR_i(\left|\mathcal{G}^{'}\right|-1)\left[t_i^{'}-t_i\right]\nonumber\\
		 &\hspace{13mm}-R_j(\left|\mathcal{G}^{'}\right|)a\left[T_j^{LB}(\left|\mathcal{G}^{'}\right|)+t_j\right]\nonumber\\
		 &\stackrel{(b)}>\sum_{SU_i\in \mathcal{G}_1^{''}\cap \mathcal{G}^{'}}aR_i(\left|\mathcal{G}^{'}\right|-1)\left[t_i^{'}-t_i\right]-R_j(\left|\mathcal{G}^{'}\right|)a\left[T_j^{LB}(\left|\mathcal{G}^{'}\right|)+t_j\right]
		 \label{tr12}
	\end{align}
where, $(b):$ from Equation~\eqref{124}. Please note that as $\mathcal{G}_1^{''}$ satisfies the condition for Case-2, we can write:
	\begin{align}
		\sum_{SU_i\in \mathcal{G}_1^{''}\cap \mathcal{G}^{'}} \left[t_i^{'}-t_i\right]>T_j^{LB}(\left|\mathcal{G}^{'}\right|)+t_j
		\label{ty23}
	\end{align}
From Equation~\eqref{ty23} and Equation~\eqref{124}, we can conclude that 
	\begin{align}
		\sum_{SU_i\in \mathcal{G}_1^{''}\cap \mathcal{G}^{'}}aR_i(\left|\mathcal{G}^{'}\right|-1)\left[t_i^{'}-t_i\right]-R_j(\left|\mathcal{G}^{'}\right|)a\left[T_j^{LB}(\left|\mathcal{G}^{'}\right|)+t_j\right]>0
	\end{align}
Hence, it is proved that due to elimination of $SU_j$ from $\mathcal{G}^{'}$ we get higher utility at the FC when $\mathcal{G}_1^{''}$ satisfies the condition for Case-2.
\end{proof}

\begin{proposition}\label{prop3_1}
	For identical cost values, i.e., $a_i=a$ and $b_i=b$, $\forall{i}$, elimination of $SU_j$ (as given in Equation~\eqref{elimination_SU} above) from $\mathcal{G}^{'}$ may not be optimal when $\mathcal{G}^{'}$ satisfies condition for Case-2/Case-3 and $\mathcal{G}_1^{''}$ satisfies condition for Case-1.
\end{proposition}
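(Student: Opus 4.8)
The statement is a negative (``may not be optimal'') claim, so the plan is to exhibit a family of admissible problem instances in which dropping $SU_j$ is strictly worse than dropping some other SU. The structural fact that drives everything is the identity already recorded in Equation~\eqref{utility_original}: under identical costs, once a retained set $\mathcal{S}$ satisfies Case-1 the FC utility collapses to $a\sum_{SU_i\in\mathcal{S}}B_i$, i.e.\ it depends only on the buffer sizes of the members and not at all on their channel gains or rates. Hence, among all single-SU eliminations from $\mathcal{G}^{'}$ that land in Case-1, the best one removes the SU with the \emph{smallest buffer}. By contrast $SU_j$ is defined in Equation~\eqref{elimination_SU} as the minimizer of $U_{FC}^i(|\mathcal{G}^{'}|)=a\,R_i(|\mathcal{G}^{'}|)$, which, since $R_i$ is increasing in $g_{SU_i-FC}$, is the SU with the smallest \emph{channel gain}. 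Buffer size $B_i$ and channel gain $g_{SU_i-FC}$ are independent parameters of the model, so there is no reason the two minimizers coincide; this mismatch is exactly the obstruction to be exploited.

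Concretely, I would take $\mathcal{G}^{'}$ with three SUs and fix $\bar P_{fa},\bar k$. Choose $SU_1$ to have the smallest channel gain (so $SU_j=SU_1$) but a comparatively large buffer $B_1$, and $SU_2$ to have a larger channel gain but a tiny buffer $B_2<B_1$; let $SU_3$ be arbitrary. I would then tune the frame length $T$ (hence $T^{'}$) and the sensing-cost parameters $N,a_s,a_t$ so that (i) $\mathcal{G}^{'}$ itself satisfies Case-2 or Case-3, i.e.\ $T^{'}(|\mathcal{G}^{'}|)\le\sum_{SU_i\in\mathcal{G}^{'}}T_i^{UB}(|\mathcal{G}^{'}|)$; (ii) $\mathcal{G}_1^{''}=\mathcal{G}^{'}\setminus SU_1$ satisfies Case-1; and (iii) $\mathcal{G}_2^{''}=\mathcal{G}^{'}\setminus SU_2$ also satisfies Case-1. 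Such a tuning exists because removing one SU simultaneously raises the usable time (Equation~\eqref{mod_eff}) and lowers the upper bounds of the survivors (Equation~\eqref{lwr_lim}), so by enlarging $B_1$ the full sum $\sum T_i^{UB}$ is pushed above $T^{'}$ while either reduced sum stays below it. For such an instance Equation~\eqref{utility_original} gives $U_{FC}(\mathcal{G}_1^{''})=a(B_2+B_3)$ and $U_{FC}(\mathcal{G}_2^{''})=a(B_1+B_3)$, whence $U_{FC}(\mathcal{G}_2^{''})>U_{FC}(\mathcal{G}_1^{''})$: eliminating $SU_j$ is strictly suboptimal, which proves the proposition.

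I would also add a short remark pinpointing why the argument of Proposition~\ref{prop3} no longer applies, since that is the conceptual content. The proof of the second statement of Proposition~\ref{prop3} used crucially that $\mathcal{G}_1^{''}$ still satisfies Case-2, so that every survivor's allocation is strictly increased by the freed-up time (Equation~\eqref{ty23}) and the utility therefore benefits from the rate increase in Equation~\eqref{124}. Under Case-1 the allocations are capped at the $T_i^{UB}$'s, the remaining time is slack that is simply wasted, the rate increase is irrelevant to the utility, and Equation~\eqref{ty23} fails --- precisely the loophole the counterexample uses.

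The main obstacle is the bookkeeping needed to certify that conditions (i)--(iii) hold simultaneously for one legitimate parameter set: $T_i^{LB},T_i^{UB}$ and $T^{'}$ all depend on $L$ through $P_{FA}(L),P_{D}(L)$, which are themselves fixed monotone functions of $L$ once $\bar P_{fa},\bar k$ are chosen, so one must check that the chosen $T$ respects the correct ordering across $L=|\mathcal{G}^{'}|$ and $L=|\mathcal{G}^{'}|-1$ while keeping the interference constraint $P_{D}(L)\ge\zeta$ satisfied for both cardinalities. Once that is in place everything else is immediate from Equations~\eqref{upperlimit}, \eqref{mod_eff}, and \eqref{utility_original}.
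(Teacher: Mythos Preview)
Your approach is correct and in fact cleaner than the paper's. The paper constructs the counterexample so that $\mathcal{G}_1^{''}$ lands in Case-1 while the alternative $\mathcal{G}_2^{''}$ stays in \emph{Case-2}; it then argues, via the water-filling allocation, that the common SUs can all be given their upper bounds in $\mathcal{G}_2^{''}$ and the leftover time assigned to $SU_j$ can exceed $T_l^{UB}$, so that $R_j t_j^{'''}$ may beat $R_l T_l^{UB}=B_l$. You instead force \emph{both} $\mathcal{G}_1^{''}$ and $\mathcal{G}_2^{''}$ into Case-1, which collapses the comparison to $a(B_2+B_3)$ versus $a(B_1+B_3)$ via Equation~\eqref{utility_original} and makes the conclusion immediate once $B_1>B_2$. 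What your route buys is transparency: the mismatch between ``smallest rate'' (the criterion defining $SU_j$) and ``smallest buffer'' (the criterion that actually matters under Case-1) is displayed directly, without having to reason about the water-filling output. What the paper's route buys is a counterexample in a slightly broader regime, since it does not require the alternative elimination to reach Case-1. One small point: your feasibility sketch glosses over the fact that enlarging $B_1$ also enlarges $T_1^{UB}(|\mathcal{G}^{'}|-1)$, which sits inside the $\mathcal{G}_2^{''}$ sum you need bounded; the construction still goes through (choose $T^{'}$ between $T_1^{UB}(2)+T_3^{UB}(2)$ and $\sum_i T_i^{UB}(3)$, which is a nonempty interval), but you should state this explicitly rather than attribute it to the monotonicities in Equations~\eqref{lwr_lim} and \eqref{mod_eff} alone.
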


\begin{proof}
	If we can show that the utility for $\mathcal{G}_1^{''}$ is less than the utility for $\mathcal{G}_2^{''}$ (as defined in Equation~\eqref{eq1112}) for a scenario, then we can prove the claim of this proposition.
	
	From Equation~\eqref{relation}, we can observe that after allocating SUs their lower time bounds remaining time duration is more for $\mathcal{G}_2^{''}$ than $\mathcal{G}_1^{''}$. Therefore, we can say that we might get a scenario where $\mathcal{G}_2^{''}$ satisfies condition for Case-2 whereas $\mathcal{G}_1^{''}$ satisfies condition for Case-1. We prove this proposition for such a scenario. From Proposition~\ref{prop222}, we can say that $SU_i\in \mathcal{G}_1^{''}$ gets time duration $T_i^{UB}(\left|\mathcal{G}_1^{''}\right|)$. As $\left|\mathcal{G}_1^{''}\right|=\left|\mathcal{G}_2^{''}\right|=\left|\mathcal{G}^{'}\right|-1$, we can write difference between utilities for $\mathcal{G}_1^{''}$ and $\mathcal{G}_2^{''}$ as:
	\begin{align}
	\sum_{SU_i\in \mathcal{G}_1^{''}\cap \mathcal{G}_2^{''}} R_i(\left|\mathcal{G}^{'}\right|-1)a[T_{i}^{UB}(\left|\mathcal{G}^{'}\right|-1) - t_i^{'''}]
	+a[R_l(\left|\mathcal{G}^{'}\right|-1)T_{l}^{UB}(\left|\mathcal{G}^{'}\right|-1)-R_j(\left|\mathcal{G}^{'}\right|-1)t_j^{'''}]
	\label{frc}
	\end{align}
	where, $t_i^{'''}$ is allotted time duration to $SU_i\in \mathcal{G}_2^{''}$.  
	
	We consider a scenario, where for $SU_i\in \mathcal{G}_1^{''}\cap \mathcal{G}_2^{''}$, we get $T_{i}^{UB}(\left|\mathcal{G}^{'}\right|-1)=t_i^{'''}$, such that,
	\begin{align}
	\sum_{SU_i\in \mathcal{G}_1^{''}\cap \mathcal{G}_2^{''}} R_i(\left|\mathcal{G}^{'}\right|-1)a[T_{i}^{UB}(\left|\mathcal{G}^{'}\right|-1) - t_i^{'''}]=0
	\label{eq1_4}
	\end{align}
	As $\mathcal{G}_2^{''}$ satisfies Case-2, $SU_j\in \mathcal{G}_2^{''}$ is allotted: 
	\begin{align}
	t_{j}^{'''}=T^{'}(\left|\mathcal{G}^{'}\right|-1)-\sum_{SU_{i}\in \mathcal{G}_1^{''}\cap \mathcal{G}_2^{''}} T_{i}^{UB}(\left|\mathcal{G}^{'}\right|-1) \label{eq1_cond}
	\end{align}
	As $\mathcal{G}_1^{''}$ satisfies Case-1, we can say that $SU_l\in \mathcal{G}_1^{''}$ gets 
	\begin{align}
	T_{l}^{UB}(\left|\mathcal{G}^{'}\right|-1)\leq t_j^{'''} \label{eq2_cond}
	\end{align} 
	Therefore, we may get:
	\begin{align}
	  R_l(\left|\mathcal{G}^{'}\right|-1)T_{l}^{UB}(\left|\mathcal{G}^{'}\right|-1)<R_j(\left|\mathcal{G}^{'}\right|-1)t_j^{'''}
	  \label{frg}
	\end{align}  
	So, from Equation~\eqref{frg}, \eqref{eq1_4}, and \eqref{frc}, we can conclude that we can not always say 
	that the FC gets more utility for $\mathcal{G}_1^{''}$ than $\mathcal{G}_2^{''}$. Hence, exclusion of $SU_j$ (as defined below Equation~\eqref{elimination_SU}) from $\mathcal{G}^{'}$ may not be optimal when $\mathcal{G}_1^{''}$ satisfies the condition for Case-1.	
\end{proof}

Now, we prove some useful propositions which we use in Algorithm~\ref{algo4}. We use different sets of SUs, $\mathcal{G}_{Ex}, \mathcal{G}_{left}=\mathcal{G}^{'}\setminus \mathcal{G}_{Ex}$, where $\mathcal{G}_{Ex}$ is defined in step-12 of Algorithm~\ref{algo3}. Moreover, we also use sets $\mathcal{G}_{3-6}^{'}$, which are constructed with SUs from $\mathcal{G}_{Ex}$ and $\mathcal{G}_{left}$ as defined in  Table~\ref{my-label24}. 

\begin{proposition}\label{prop44}
	For identical cost values, i.e., $a_i=a,b_i=b,\forall{SU_i\in \mathcal{G}^{'}}$, it is optimal to stop exchanging more than $n$ number of SUs between $\mathcal{G}_{left}$ and $\mathcal{G}_{Ex}$ when either of following two conditions satisfy: 
	\begin{enumerate}
		\item Both $\mathcal{G}_3^{'}$ and $\mathcal{G}_4^{'}$ satisfy Case-2.
		\item $\mathcal{G}_4^{'}$ satisfies Case-3.
	\end{enumerate}  
\end{proposition}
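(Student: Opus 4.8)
Proposal for the proof of Proposition~\ref{prop44}.

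The plan is to reduce everything to two structural facts that hold under identical costs. First, every set reachable by swapping SUs between $\mathcal{G}_{left}$ and $\mathcal{G}_{Ex}$ keeps the same cardinality $L=\left|\mathcal{G}_{left}\right|$, so $T^{'}(L)$ and all of $R_i(L),T_i^{LB}(L),T_i^{UB}(L),U_{FC}^i(L)$ are the same for every candidate set. Second, since Algorithm~\ref{algo3} always discards the SU of smallest $U_{FC}^i=aR_i(\cdot)$, and by Equations~\eqref{r0}--\eqref{rate1} the ordering of the $R_i$'s among SUs follows the ordering of $g_{SU_i-FC}$ and is therefore independent of the cardinality, every $SU_i\in\mathcal{G}_{left}$ has $R_i(L)\ge R_j(L)$ for every $SU_j\in\mathcal{G}_{Ex}$, hence $T_i^{LB}(L)\le T_j^{LB}(L)$ by Equation~\eqref{diff_relation}. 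I would record these, together with the elementary remark that the $n$-swap set $\mathcal{G}_4^{'}$ is exactly the one minimising $\sum_i T_i^{LB}(L)$ among $n$-swap sets (it removes the $n$ largest-$T^{LB}$ members of $\mathcal{G}_{left}$ and inserts the $n$ smallest-$T^{LB}$ members of $\mathcal{G}_{Ex}$), while $\mathcal{G}_3^{'}$ maximises it; note also that under identical costs $\mathcal{G}_6^{'}$ coincides with $\mathcal{G}_4^{'}$, since sorting by $U_{FC}^i$ is the reverse of sorting by $T_i^{LB}$.

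The engine is one monotonicity step: enlarging $\mathcal{G}_4^{'}$ from $n$ to $n+1$ swaps removes one extra member of $\mathcal{G}_{left}$ (time bound $\lambda$) and inserts one extra member of $\mathcal{G}_{Ex}$ (time bound $\mu$), and $\lambda\le\mu$ by the second fact, so $\sum_{SU_i\in\mathcal{G}_4^{'}}T_i^{LB}(L)$ is non-decreasing in the number of swaps (and likewise for $\mathcal{G}_3^{'}$). Condition~2 follows at once: if $\mathcal{G}_4^{'}$ satisfies Case-3 at level $n$, then for every $m\ge n$ the level-$m$ set $\mathcal{G}_4^{'}$ has $\sum_i T_i^{LB}(L)\ge\sum_{SU_i\in\mathcal{G}_4^{'}(n)}T_i^{LB}(L)>T^{'}(L)$; since this is the smallest lower-bound sum among all $m$-swap sets, every set reachable with more than $n$ swaps satisfies the Case-3 condition~\eqref{case3_eq} and is infeasible, so there is nothing to gain by continuing.

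For condition~1 I would first invoke Proposition~\ref{prop42}: when $\mathcal{G}_3^{'}$ and $\mathcal{G}_4^{'}$ both satisfy Case-2 at level $n$, all $n$-swap sets are in Case-2 and $\mathcal{G}_6^{'}\,(=\mathcal{G}_4^{'})$ is the best of them. It then suffices to show no set reachable with $m>n$ swaps beats it. Writing the Case-2 utility of a cardinality-$L$ set $\mathcal{S}$ produced by Algorithm~\ref{algo2} as $a\sum_{SU_i\in\mathcal{S}}R_i(L)\tau_i$ with $\sum_i\tau_i=T^{'}(L)$ and $\tau_i\in[T_i^{LB}(L),T_i^{UB}(L)]$, and using $R_i(L)T_i^{LB}(L)=(Na_s+a_t)/(b-a)$, this equals the fixed baseline $La(Na_s+a_t)/(b-a)$ plus the value of greedily placing the surplus $T^{'}(L)-\sum_i T_i^{LB}(L)$ on the highest-rate SUs first. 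Passing from $n$ to $m>n$ swaps in $\mathcal{G}_4^{'}$ replaces $m-n$ members of $\mathcal{G}_{left}$ by strictly lower-rate members of $\mathcal{G}_{Ex}$ and enlarges $\sum_i T_i^{LB}(L)$, so both the rate profile and the surplus deteriorate; a pairing argument then gives $U_{FC}(\mathcal{G}_4^{'}(m))\le U_{FC}(\mathcal{G}_4^{'}(n))$ whenever $\mathcal{G}_4^{'}(m)$ is still Case-2, and since $\mathcal{G}_4^{'}(m)$ is itself the best $m$-swap set, the $\mathcal{H}^{*}$ selected after stopping at level $n$ is optimal.

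The main obstacle I expect is exactly this last rate-versus-surplus comparison: making it rigorous that fewer swaps give at least as large a greedy water-filling value, even though the lower-rate SUs brought in at higher levels simultaneously carry larger $T_i^{LB}(L)$ and larger $T_i^{UB}(L)$, requires an explicit pairing of the allocations of $\mathcal{G}_4^{'}(n)$ and $\mathcal{G}_4^{'}(m)$ rather than a crude bound. Closely related, one should check that no $m$-swap set with $m>n$ can slip into Case-1 with an exceptionally large $\sum_{SU_i}aB_i$, which may need an extra dominance step in the spirit of Proposition~\ref{prop222}. Everything else is bookkeeping on the partition of $\mathcal{G}_{left}\cup\mathcal{G}_{Ex}$.
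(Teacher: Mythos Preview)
Your proposal is correct and follows essentially the same route as the paper: both arguments hinge on (i) the extremality of $\mathcal{G}_3^{'}$ and $\mathcal{G}_4^{'}$ with respect to $\sum_i T_i^{LB}(L)$ among $n$-swap sets, (ii) the monotone increase of this sum as more SUs are swapped (your $\lambda\le\mu$ step is the paper's Equation~\eqref{eq111}), and (iii) the conclusion that under Case-2 the greedy water-filling utility can only drop when higher-rate $\mathcal{G}_{left}$ members are traded for lower-rate $\mathcal{G}_{Ex}$ members over the same $T^{'}(L)$. Your identification $\mathcal{G}_6^{'}=\mathcal{G}_4^{'}$ under identical costs is a nice streamlining; the paper keeps the two sets formally separate but uses the same ordering fact via Equation~\eqref{eq112}.

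The one place you are more careful than the paper is the obstacle you flag at the end: whether some $m$-swap set with $m>n$ could land in Case-1 rather than Case-2 or Case-3. The paper simply asserts, from Equation~\eqref{eq111} alone, that after exchanging $(n+l)$ SUs ``all sets of SUs will either satisfy Case-2 or Case-3,'' without touching the upper-bound sum that distinguishes Case-1 from Case-2. So your proposed extra dominance step is not something the paper supplies; if you can close it cleanly your argument will in fact be tighter than the published one.
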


\begin{proof}
We first prove the first statement of this proposition. It is to be noted that after exchanging $n$ number of SUs between $\mathcal{G}_{left}$ and $\mathcal{G}_{Ex}$, we can have total ${\left|\mathcal{G}_{left}\right| \choose n}\times {\left|\mathcal{G}_{Ex}\right| \choose n}$ number of sets of SUs. From Table~\ref{my-label24}, it can be verified that sum of lower time bounds of SUs of $\mathcal{G}_3^{'}$ is highest among all ${\left|\mathcal{G}_{left}\right| \choose n}\times {\left|\mathcal{G}_{Ex}\right| \choose n}$ number of sets; whereas, sum of lower bounds of SUs of $\mathcal{G}_4^{'}$ is lowest. Therefore, we can conclude that if both $\mathcal{G}_3^{'}$ and $\mathcal{G}_4^{'}$ satisfy Case-2, then remaining all sets which are constructed after exchanging $n$ number of SUs between $\mathcal{G}_{left}$ and $\mathcal{G}_{Ex}$, will satisfy Case-2. It can be observed from Algorithm~\ref{algo3} that for $SU_i\in \mathcal{G}_{left}$ and $SU_j\in \mathcal{G}_{Ex}$ we get:
	\begin{subequations}
			\begin{align}
			T_i^{LB}(\left|\mathcal{G}^{'}\right|)&<T_j^{LB}(\left|\mathcal{G}^{'}\right|) \label{eq111}\\
			U_{FC}^i(\left|\mathcal{G}^{'}\right|)&>U_{FC}^j(\left|\mathcal{G}^{'}\right|) \label{eq112}
			\end{align}
	\end{subequations}
Hence, from Equation~\eqref{eq111}, we can say that if after exchanging $n$ number of SUs between $\mathcal{G}_{left}$ and $\mathcal{G}_{Ex}$ all sets of SUs satisfy Case-2, then after exchanging $(n+l),l=1,..,\min{\left\{\left|\mathcal{G}_{left}\right|,\left|\mathcal{G}_{Ex}\right|\right\}}-n$, number of SUs all sets of SUs will either satisfy Case-2 or Case-3. As for Case-3, the resource allocation problem is not feasible, we consider sets of SUs which satisfy Case-2 after exchanging $(n+l)$ number of SUs between $\mathcal{G}_{left}$ and $\mathcal{G}_{Ex}$. It is to be noted that effective time duration, i.e., $T^{'}(\left|\mathcal{G}^{'}\right|)$, is same for either of two scenarios, i.e., for exchange of $n$ and $(n+l)$ number of SUs. Hence, from Equation~\eqref{eq112} and Algorithm~\ref{algo2}, we can conclude that the FC will get lower utility for sets of SUs which satisfy Case-2 after exchanging $(n+l)$ number of SUs. So, there is no need of exchanging more than $n$ number of SUs.
		
Now we prove the second statement of this proposition. It is to be noted that sum of lower bounds of SUs of $\mathcal{G}_4^{'}$ is lowest among all ${\left|\mathcal{G}_{left}\right| \choose n}\times {\left|\mathcal{G}_{Ex}\right| \choose n}$ number of sets which we construct after exchanging $n$ number of SUs. If $\mathcal{G}_4^{'}$ satisfies Case-3, then from Equation~\eqref{eq111}, we can conclude that we will get Case-3 only after exchanging $(n+l)$ number of SUs. Hence, we should stop exchanging any more than $n$ number of SUs.
\end{proof}

\begin{proposition}\label{prop42}
	For identical cost values, i.e., $a_i=a,b_i=b,\forall{SU_i\in \mathcal{G}^{'}}$, among all possible sets of SUs' constructed after exchanging $n$ number of SUs between $\mathcal{G}_{left}$ and $\mathcal{G}_{Ex}$, we can say:
	\begin{enumerate}
		\item The FC gets maximum utility for $\mathcal{G}_5^{'}$ when both $\mathcal{G}_1^{'}$ and $\mathcal{G}_2^{'}$ satisfy Case-1.
		\item The FC gets maximum utility for $\mathcal{G}_6^{'}$ when both $\mathcal{G}_3^{'}$ and $\mathcal{G}_4^{'}$ satisfy Case-2. 
	\end{enumerate}
\end{proposition}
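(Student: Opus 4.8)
The plan is to handle the two statements in parallel: in each case I would first collapse the FC's utility to an elementary functional of the selected set, then maximize that functional over the finite family of exchange-$n$ sets by an exchange argument. Throughout, the identical-cost hypothesis lets me write, at the common cardinality $\left|\mathcal{G}_{left}\right|$, $U_{FC}^i=aR_i$, $T_i^{LB}=\frac{Na_s+a_t}{R_i(b-a)}$ and $T_i^{UB}=B_i/R_i$, so that $U_{FC}^iT_i^{LB}=\frac{a(Na_s+a_t)}{b-a}$ is the same constant for every SU, ordering SUs by $U_{FC}^i$ is the same as ordering by $R_i$ descending and by $T_i^{LB}$ ascending, and by Equations~\eqref{eq111}--\eqref{eq112} every SU of $\mathcal{G}_{left}$ precedes every SU of $\mathcal{G}_{Ex}$ in that order.

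For statement~1, observe that $\mathcal{G}_1^{'}$ has the largest value of $\sum_{SU_i}T_i^{UB}$ among all exchange-$n$ sets --- it is obtained by dropping the $n$ SUs of $\mathcal{G}_{left}$ with the smallest upper bounds and bringing in the $n$ SUs of $\mathcal{G}_{Ex}$ with the largest --- so the hypothesis that $\mathcal{G}_1^{'}$ (and $\mathcal{G}_2^{'}$) satisfy Case-1 forces Equation~\eqref{case1_eq} for every exchange-$n$ set. By Proposition~\ref{prop222}, for any such set $\mathcal{H}$ the FC's utility equals $\sum_{SU_i\in\mathcal{H}}a_iB_i=a\sum_{SU_i\in\mathcal{H}}B_i$, so maximizing the FC's utility is equivalent to maximizing $\sum_{SU_i\in\mathcal{H}}B_i$. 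Writing $\mathcal{H}=(\mathcal{G}_{left}\setminus S_1)\cup S_2$ with $\left|S_1\right|=\left|S_2\right|=n$, this sum is $\sum_{SU_i\in\mathcal{G}_{left}}B_i-\sum_{SU_i\in S_1}B_i+\sum_{SU_i\in S_2}B_i$, which is plainly maximized by choosing $S_1$ to be the $n$ smallest-buffer SUs of $\mathcal{G}_{left}$ and $S_2$ the $n$ largest-buffer SUs of $\mathcal{G}_{Ex}$, i.e. exactly $\mathcal{G}_5^{'}$ of Table~\ref{my-label24}. This part is routine.

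For statement~2, since $\mathcal{G}_3^{'}$ has the largest $\sum_{SU_i}T_i^{LB}$ among exchange-$n$ sets and $\mathcal{G}_4^{'}$ the smallest, the hypothesis that both are Case-2 forces every exchange-$n$ set into Case-2 (as already noted in the discussion preceding this proposition), so each is allocated by the water-filling rule of Algorithm~\ref{algo2}. For any Case-2 set $\mathcal{H}$ the lower-bound phase contributes $\sum_{SU_i\in\mathcal{H}}U_{FC}^iT_i^{LB}=\left|\mathcal{H}\right|\frac{a(Na_s+a_t)}{b-a}$, a constant common to all exchange-$n$ sets, so these sets differ only in the value obtained by distributing the residual time $T_{rem}(\mathcal{H})=T^{'}(\left|\mathcal{H}\right|)-\sum_{SU_i}T_i^{LB}$ greedily in decreasing order of $U_{FC}^i$. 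I would then argue by single-SU swaps: any exchange-$n$ set $\mathcal{H}\neq\mathcal{G}_6^{'}$ admits a swap that replaces one SU by another of at least as large $U_{FC}^i$ while remaining an exchange-$n$ set --- replace an SU of $\mathcal{H}\cap\mathcal{G}_{left}$ that $\mathcal{G}_6^{'}$ drops by one of $\mathcal{G}_{left}\setminus\mathcal{H}$ that $\mathcal{G}_6^{'}$ keeps, or the analogous swap inside $\mathcal{G}_{Ex}$ --- and iterating reaches $\mathcal{G}_6^{'}$; so it suffices to show that such a ``swap up'' never decreases the residual-time value. The incoming SU has a higher rate, hence a smaller $T_i^{LB}$, so $T_{rem}$ does not shrink, and it sits at least as high in the greedy order, so it is served at least as favourably.

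The main obstacle is exactly this last monotonicity lemma. The incoming SU has larger per-unit value $U_{FC}^i$ and the residual budget grows, but its capacity $T_i^{UB}-T_i^{LB}=\frac{1}{R_i}\bigl(B_i-\frac{Na_s+a_t}{b-a}\bigr)$ need not be larger --- its buffer may be smaller --- so the greedy step widths of the two sets are not directly comparable and the fill-value step functions need not dominate pointwise. I would attack this through the level-set identity $V(\mathcal{H})=\int_0^\infty\min\!\bigl(T_{rem}(\mathcal{H}),\,W_{\mathcal{H}}(h)\bigr)dh$ with $W_{\mathcal{H}}(h)=\sum_{SU_i\in\mathcal{H},\,U_{FC}^i>h}\bigl(T_i^{UB}-T_i^{LB}\bigr)$, combined with a case split on whether, after the swap, the incoming SU is saturated at its upper bound, partially filled, or left at its lower bound: in the first two cases its higher rate strictly helps the value, and in the last case the change is confined to the freed budget, which only grew. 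Proving this lemma cleanly --- that rate domination together with a larger residual budget always outweighs a possibly smaller capacity profile --- and verifying that the chain of swaps never leaves Case-2 en route, is where the real work lies; the reductions to $\sum B_i$ and to the greedy residual value, and the combinatorial optimality of $\mathcal{G}_5^{'}$ and $\mathcal{G}_6^{'}$ given those reductions, are then immediate.
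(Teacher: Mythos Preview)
Your treatment of statement~1 is correct and coincides with the paper's argument: both reduce the Case-1 utility to $a\sum_{SU_i\in\mathcal{H}}B_i$ via Proposition~\ref{prop222} and then observe that $\mathcal{G}_5^{'}$ maximizes this sum by construction.

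For statement~2 you go considerably further than the paper, whose entire proof is the single unjustified line ``It is to be noted that among all possible sets of SUs which satisfy Case-2, $\mathcal{G}_6^{'}$ gives maximum utility at the FC.'' Your reduction to the greedy residual value is correct, and you put your finger on exactly the right obstacle: the swap-up monotonicity lemma. Unfortunately the obstacle is fatal --- the lemma is false, so neither your plan nor the paper's assertion can be completed as stated. Take $\mathcal{G}_{left}=\{A,B\}$, $\mathcal{G}_{Ex}=\{C,D\}$ with rates $R_A=10,\,R_B=8,\,R_C=5,\,R_D=4$, costs $a=1,\,b=2,\,Na_s+a_t=1$, effective time $T'=1$, and buffers $B_A=1.5,\,B_B=6.4,\,B_C=4.25,\,B_D=1$, so that $T^{LB}=(0.1,\,0.125,\,0.2,\,0.25)$ and $T^{UB}=(0.15,\,0.8,\,0.85,\,0.25)$. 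Then $\mathcal{G}_{left}$ is Case-1, both $\mathcal{G}_3^{'}=\{B,D\}$ and $\mathcal{G}_4^{'}=\{A,C\}$ are Case-2, yet $\mathcal{G}_6^{'}=\{A,C\}$ gives FC utility $10\cdot0.15+5\cdot0.85=5.75$ while the exchange-$1$ set $\{B,C\}$ gives $8\cdot0.8+5\cdot0.2=7.4$. The swap from $\{B,C\}$ to $\mathcal{G}_6^{'}$ is precisely your ``swap up'' ($R_A>R_B$), but it strictly lowers the utility because $A$'s tiny capacity $T_A^{UB}-T_A^{LB}=0.05$ pushes nearly all the residual onto $C$ at rate $5$, whereas $\{B,C\}$ lets $B$ absorb $0.675$ units at rate $8$. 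Your level-set attack cannot rescue this either: here $W_{\{B,C\}}(h)>W_{\{A,C\}}(h)$ on the interval $[5,8)$ and the resulting integral gap dominates. The upshot is that statement~2, as written, appears to require an additional hypothesis linking buffers to rates that neither the paper nor your proposal supplies.
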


\begin{proof}
We first prove the first statement of this proposition. From Table~\ref{my-label24}, it can be observed that among all possible sets of SUs' constructed after exchanging $n$ number of SUs between $\mathcal{G}_{left}$ and $\mathcal{G}_{Ex}$, summation of upper time bounds of SUs of $\mathcal{G}_2^{''}$ and $\mathcal{G}_1^{''}$ are lowest and highest respectively. Hence, we can say that if both $\mathcal{G}_1^{''}$ and $\mathcal{G}_2^{''}$ satisfy Case-1, then all possible sets of SUs', which are constructed after exchanging $n$ number of SUs between $\mathcal{G}_{left}$ and $\mathcal{G}_{Ex}$, will satisfy Case-2. From Equation~\eqref{utility_first} as given in Proposition~\ref{prop222}, we can observe that for $a_i=a,b_i=b,\forall{SU_i\in \mathcal{G}^{'}}$, the FC's utility is equal to summation of number of bits in buffers of SUs of the set which satisfies Case-2. Therefore, it can be concluded that the FC gets more utility for $\mathcal{G}_5^{'}$ compared to other all possible sets of SUs which are constructed after exchanging $n$ number of SUs between $\mathcal{G}_{left}$ and $\mathcal{G}_{Ex}$.
		
Now, we prove the second statement of this proposition. In Proposition~\ref{prop44}, we have proved that if both $\mathcal{G}_3^{'}$ and $\mathcal{G}_4^{'}$ satisfy Case-2, then all possible sets of SUs', which are constructed after exchanging $n$ number of SUs between $\mathcal{G}_{left}$ and $\mathcal{G}_{Ex}$, will satisfy Case-2. It is to be noted that among all possible sets of SUs which satisfy Case-2, $\mathcal{G}_6^{'}$ gives maximum utility at the FC.
\end{proof}

\section{Results and Discussions}\label{result}
In this section, we discuss about some results and show the efficacy of our proposed algorithm. We consider that both $g_{PT-FC}$ and $g_{SU_i-FC},\forall{SU_i\in \mathcal{G}}$, follow exponential distributions with unit mean; noise spectral density as -174 dBm/Hz., such that, effective noise power becomes $N_0=[-174+10\log_{10}(B_w)]$ dBm. Different other parameters' values which we consider in our simulation, are given in Table~\ref{my-label1}. We consider identical data type at all SUs' buffers, such that, cost values (i.e., $a_i$ and $b_i$) become identical.

\begin{table}[h]
	\centering
	\caption{Different parameters}
	\vspace{1mm}
	\label{my-label1}
	\begin{tabular}{|c|c|c|c|}
		\hline
		Parameter & Value    & Parameter      & Value                 \\ \hline
		$N$       & 40       & $P_{ST}$       & 23 dBm                \\ \hline
		$P_{PT}$  & 43 dBm       & $R_b$          & 250 Kbps              \\ \hline
		$B_w$     & 15KHz    & $f_s$          & 6 MHz. 				  \\ \hline
		$a_s$     & 0.0001   & $a_t$          &0.001   				  \\ \hline
		$b_i,\forall{SU_i\in \mathcal{G}}$       & 10       & $a_i,\forall{SU_i\in \mathcal{G}}$            &0.1   		   		  \\ \hline
	\end{tabular}
\end{table}

Please note that in Section~\ref{111111} and \ref{1121}, we take 1000 Monte-Carlo simulations, i.e., we generate $g_{SU_i-FC},\forall{SU_i\in \mathcal{G}}$ 1000 times and perform resource allocation procedure each time. Corresponding results in Section~\ref{111111} and \ref{1121} are shown taking average of all outcomes.

\subsection{Comparison with search operation}\label{111111}
In order to check optimality of our proposed algorithm, we compare with exhaustive search. We consider $P(H_0)=0.8, \gamma=-7$ dB, $B_i=1000$ bits, $g_{SU_i-FC}\sim \exp(1)$, $\forall{SU_i\in \mathcal{G}}$. We consider nine different values for $P_{fa}=[0.1,0.2,0.3,0.4,0.5,0.6,0.8,0.9]$ which are considered during search operation over $P_{fa}$. 

In Fig.~\ref{result1}(a), we plot the FC's utility for varying $\zeta$ for our proposed algorithm and exhaustive search method. We consider three different values for $M=3,5,$ and 7. It is observed that for a given value of $M$, the FC's utility decreases with $\zeta$. As the interference probability on the primary network is increased, secondary network gets less chance to access the licensed spectrum, which results reduced utility at the FC. The FC's utility improves when number of SUs is increased because more number of SUs help in improving the detection performance of the secondary network. From Fig.~\ref{result1}(a), it can be observed that our proposed algorithm gives exact result like exhaustive search, which proves that we get optimal result using our proposed algorithm.  

In Fig.~\ref{result1}(b), we plot the computational complexity (i.e., relative time required to find out the FC's optimal utility) for both the proposed algorithm and the exhaustive search method. We perform relative comparison as the outcome depends on computer specification and coding efficiency. We consider different values for $M$ and $\zeta$ and observe that convergence time increases for increasing $M$; whereas, for higher value of $\zeta$, convergence time duration reduces. It can be observed that for higher value of $M$, corresponding time of execution for exhaustive search increases significantly. However, it is almost linear for our proposed algorithm. If we consider high value for $\zeta$, then the threshold at the FC becomes high, which reduces the effective search space and hence overall execution time.
\begin{figure*}
	\centering
	\subfigure[FC utility for varying $\zeta$]{
		\label{fig_1}
		\includegraphics[width=0.7\textwidth]{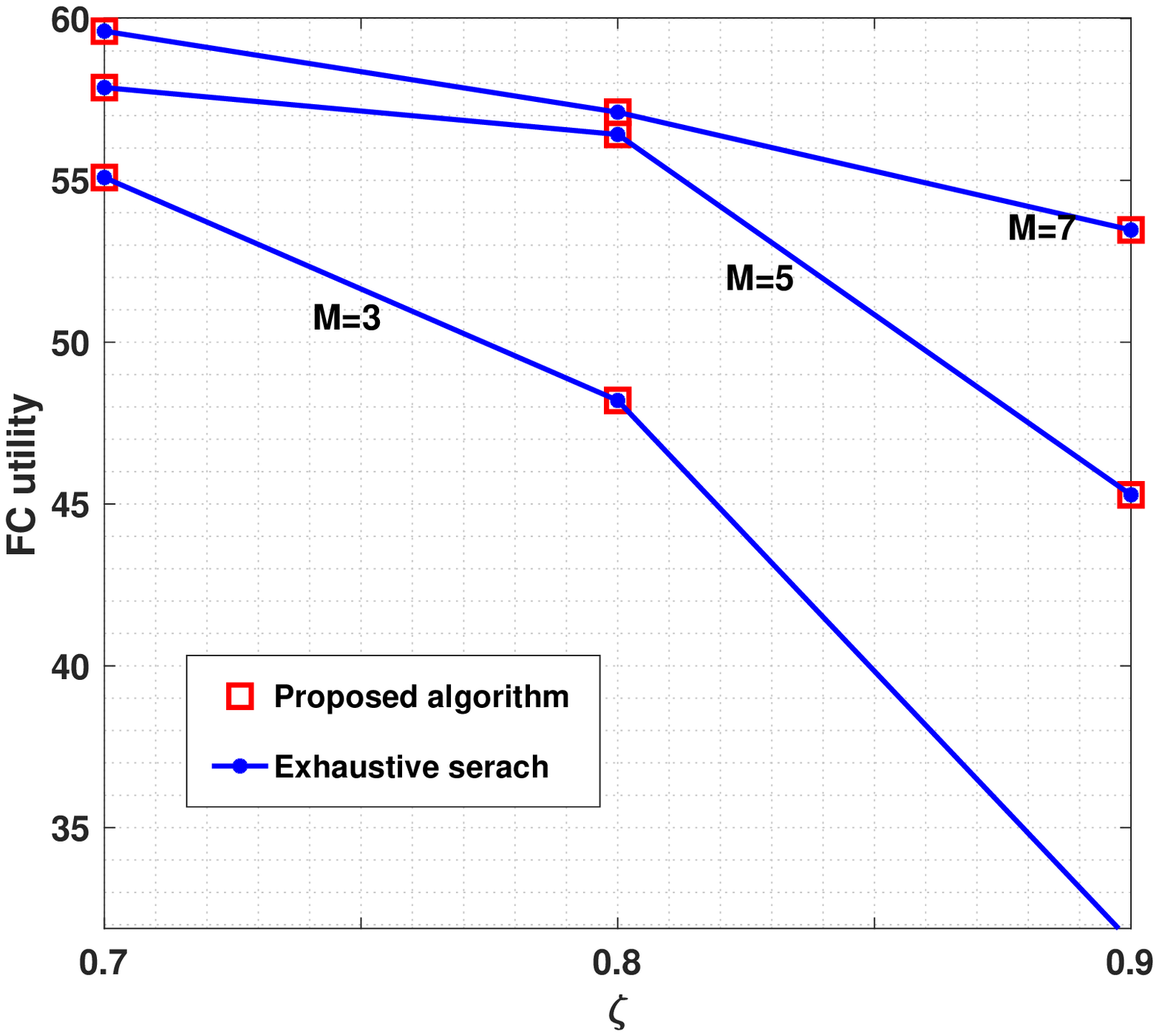}
	}
	~
	\subfigure[Time duration to get optimal result]{
		\label{fig_2}
		\includegraphics[width=0.7\textwidth]{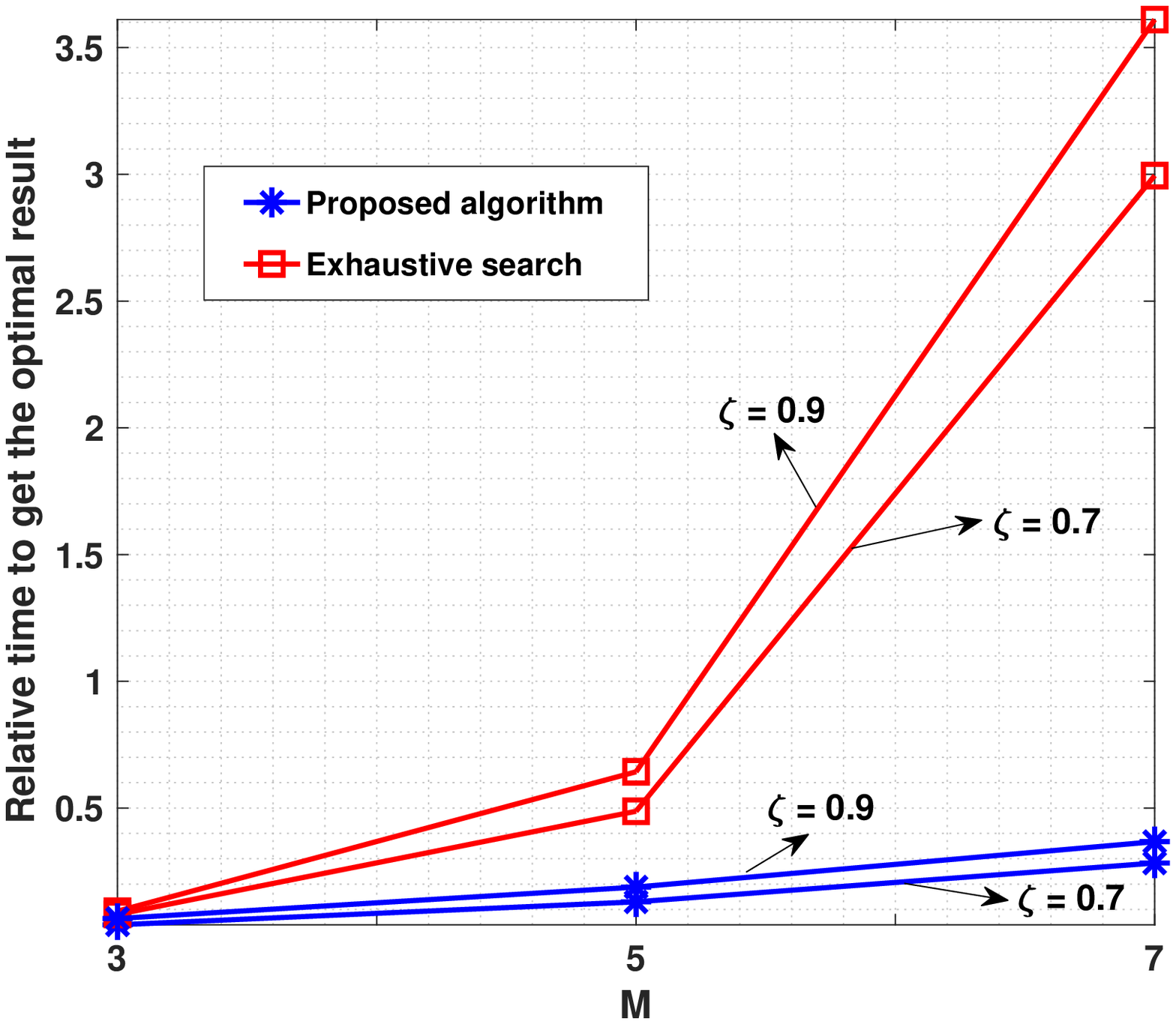}
	} 	
	\caption{Comparison with exhaustive search method}
	\label{result1}
\end{figure*}

\subsection{Advantage of joint optimization}\label{1121}
To the best of our knowledge no work has considered buffer constraint and user selection together in the context of resource allocation for opportunistic CR networks. For that reason, we can not compare our proposed algorithm with any existing framework. However, in order to understand the benefit of joint optimization (i.e., sensing thresholds, SU set for resource allocation, and allocation time durations), we compare with non-joint optimization method, which we discuss below.

\subsubsection{Non-joint optimization method}For non-joint optimization method, we consider following set of SUs:
\begin{align}
	\bar{\mathcal{G}}=\left\{SU_i|B_i(b_i-a_i)-(Na_s+a_t)\geq 0, \forall{SU_i\in \mathcal{G}}\right\}
	\label{su_set_feasible}
\end{align}  
In Equation~\eqref{su_set_feasible}, we find out SUs which get positive utility after clearing all data from their buffers. If a SU gets negative utility after clearing all data from it's buffer, then the SU will never get positive utility. Therefore, we consider $\bar{\mathcal{G}}$ in our analysis for non-joint optimization. We obtain the FC's utility solving two different optimization problems $NJ-P1$ and $NJ-P2$, which we discuss below. 
\begin{subequations}
	\begin{align}
	NJ-P1:\hspace{5mm} \underset{P_{fa},k}{\text{minimize}}\hspace{5mm}  &P_{FA}(P_{fa},k,\left|\bar{\mathcal{G}}\right|) \label{NJ-objP1}\\
	\text{subject to:} \hspace{5mm} &P_{D}(P_{fa},k,\left|\bar{\mathcal{G}}\right|) \geq \zeta \label{NJ-consP1_1}
	\end{align}
\end{subequations}
It can be observed that the optimization problem, i.e., $NJ-P1$, is purely detection oriented. For any signal detection problem, detection probability increases/decreases for fasle-alarm probability increases/decreases, which can be verified from \cite{varshney1996distributed}. Therefore, we can conclude that values for $P_{fa}=P_{fa}^{NJ-P1}$ and $k=k^{NJ-P1}$ can be solved as:
\begin{align}
  (P_{fa}^{NJ-P1},k^{NJ-P1})=\left\{(P_{fa},k)|P_{D}(P_{fa},k,\left|\bar{\mathcal{G}}\right|) = \zeta\right\}
\end{align}
We consider $\alpha_i=1,\forall{SU_i\in \bar{\mathcal{G}}}$, $P_{fa}=P_{fa}^{NJ-P1}$, and $k=k^{NJ-P1}$ to solve the following optimization problem:
\begin{subequations}
	\begin{align}
	NJ-P2:\hspace{5mm} \underset{\boldsymbol{t}}{\text{minimize}}\hspace{5mm}  &U_{FC}(P_{fa}^{NJ-P1},k^{NJ-P1},\boldsymbol{t}) \label{NJ-objP2}\\
	\text{subject to:} \hspace{5mm} &R_i(P_{fa}^{NJ-P1},k^{NJ-P1},\left|\bar{\mathcal{G}}\right|)t_i \leq B_i, \forall{SU_i\in \bar{\mathcal{G}}} \label{NJ-consP1_2}\\
	&\sum_{SU_i\in \bar{\mathcal{G}}} t_i \leq T^{'}(\left|\bar{\mathcal{G}}\right|)\label{NJ-consP2_2}
	\end{align}
\end{subequations}
where $T^{'}(\left|\bar{\mathcal{G}}\right|)=T^{'}-\sum_{i=1}^{\left|\bar{\mathcal{G}}\right|} \alpha_i\tau_r^{'}$, is effective time duration. It can be observed that we can use Algorithm~\ref{algo2} to solve $NJ-P2$ while considering $T_i^{LB}(P_{fa}^{NJ-P1},k^{NJ-P1},\left|\bar{\mathcal{G}}\right|)=0, \forall{SU_i\in \bar{\mathcal{G}}}$. Like joint optimization method, constraint on SUs' utility is not there for non-joint optimization method.

In Fig.~\ref{result2}, we plot the FC's utility received for joint and non-joint optimization methods while varying $\zeta$. We consider $M=5, P(H_0)=0.8, B_i=1000$ bits, $\forall{SU_i\in \mathcal{G}}$ and perform the comparison for two different values for $\gamma=-5$ and -7dB. It is observed that the joint method, which we have proposed, gives better result than the non-joint method. The FC's utility reduces when we increase the value for $\zeta$, which happens as the secondary network gets less opportunity to access the licensed spectrum. For increasing the sensing SNR at SUs, the FC's utility improves due to better detection performance at SUs and hence the FC. At lower sensing SNR, efficacy of joint optimization method is more prominent.
\begin{figure}[t!]
	\centering
	\includegraphics[trim=0cm 0cm 0cm 0cm,clip=true,width=12cm]{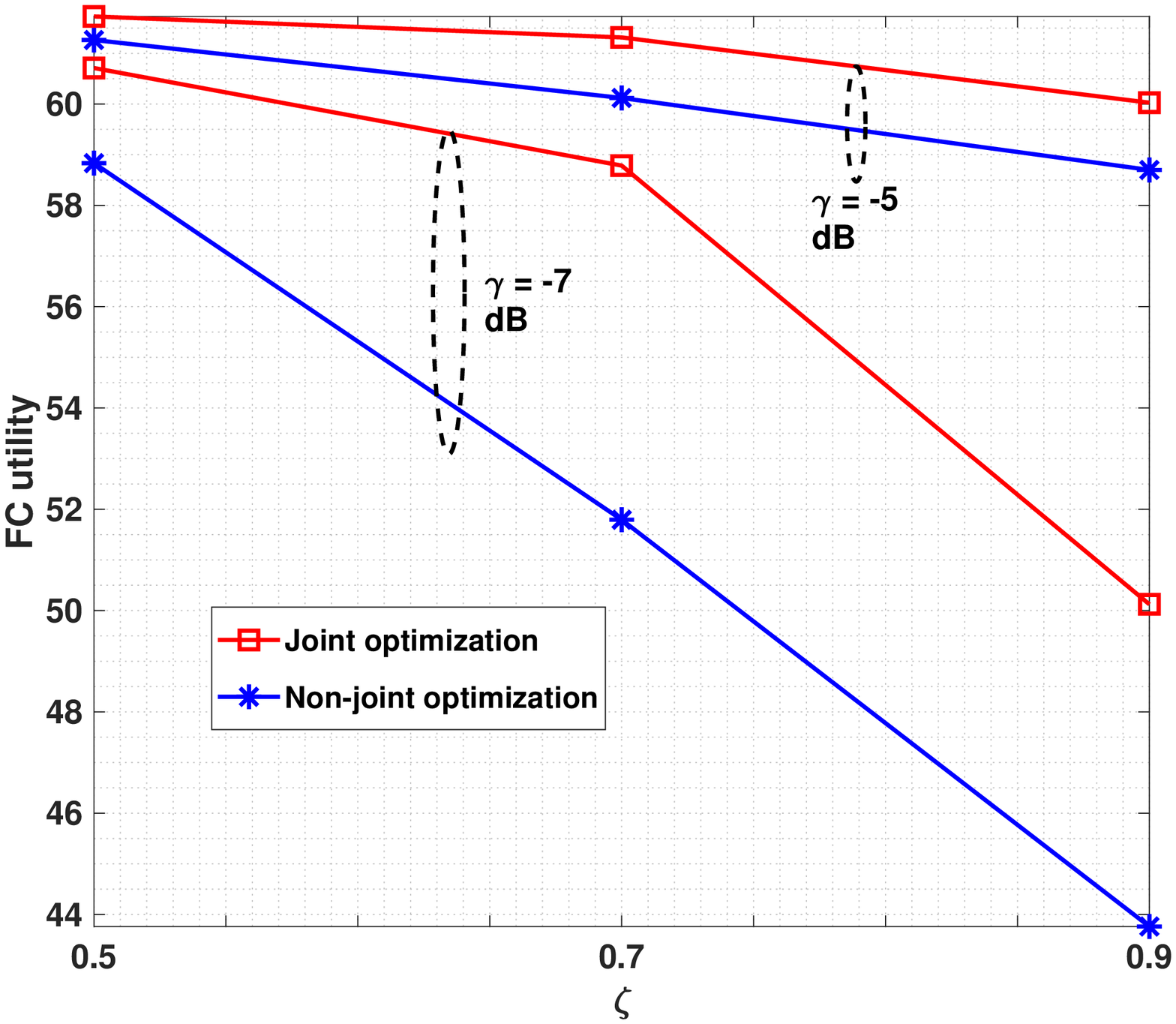}
	\DeclareGraphicsExtensions{.eps}
	\caption{FC utility vs $\zeta$}
	\label{result2}
\end{figure}

Please note that for non-joint optimization, we do not consider positive utility constraint for SUs. Therefore, some SUs get negative utility after the resource allocation. In Fig.~\ref{result6}, we consider $\gamma=-5$ dB and plot number of SUs get negative utility while varying data size in SUs' buffers, i.e., $B_i, \forall{SU_i\in \bar{\mathcal{G}}}$. We observe that as the buffer size and average value of channel gain, i.e., $g_{ST_i-FC}, \forall{SU_i\in \bar{\mathcal{G}}}$ increase, more number of SUs that get negative utility. From the expression of upper time bound as given in Equation~\eqref{upperlimit}, it can be observed that upper limit on allotted time duration increases as buffer size as well as channel gain increase. Therefore, the chance of getting time duration for a SU with lower channel gain, reduces. 
\begin{figure}[h!]
	\centering
	\includegraphics[trim=0cm 0cm 0cm 0cm,clip=true,width=12cm]{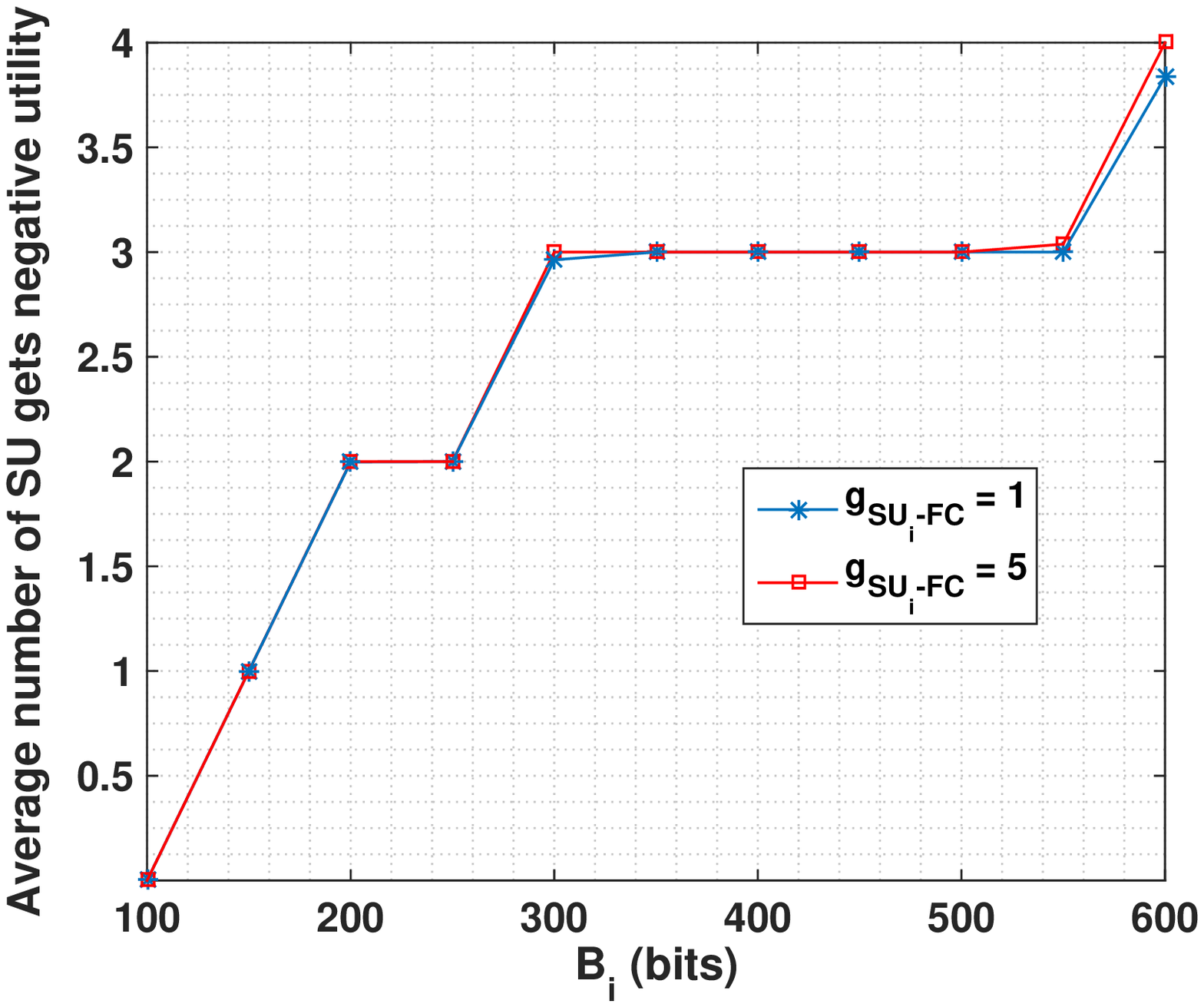}
	\DeclareGraphicsExtensions{.eps}
	\caption{Number of SU gets netaive utility for non-joint optimization for varying buffer size}
	\label{result6}
\end{figure}

\subsection{Delay measure}
In this section, we discuss about the average delay experienced by SUs in our proposed algorithm to clear their data from their buffers. Following, we discuss about the simulation set-up which we follow to generate Fig.~\ref{result11}(a) and \ref{result11}(b).

\textbf{Simulation set-up for delay measure:} In real time, the FC will perform the resource allocation for multiple frames. For that reason, we consider multi-frame simulation in order to give a clear idea about how the resource allocation procedure which we have discussed so far, may take place in real. During each frame duration, i.e., $T=1$ msec., the PU may be absent with probability $P(H_0)$ or may be present with probability $P(H_1)$. During each frame duration, we generate following channel gains:
\begin{itemize}
	\item Sensing channel gain, i.e., $g_i=\left|h_i\right|^2$ (as mentioned in Equation~\eqref{Pd}), which follows exponential distribution with mean value of $\mu_{1}$. We assume identical values for $g_i=g, \forall{SU_i\in \mathcal{G}}$.
	
	\item Channel gains between SUs and the FC, i.e., $g_{SU_i-FC}, \forall{SU_i}$ (as mentioned in Equation~\eqref{r0} and \eqref{r1}), follow exponential distribution with mean value of $\mu_{2}$.	
\end{itemize}
We assume that the FC has exact knowledge about $g_{SU_i-FC}, \forall{SU_i}$. This assumption remains valid when the control channel frequency and licensed channel frequency lie within the coherence bandwidth. Such that, estimation of channel gains between SUs and the FC in Phase-1 will correspond to channel gains between SUs and the FC in Phase-6.

At the beginning of each frame duration $SU_i$ informs to the FC about it's stored data at buffer and payment details, i.e., $a_i$ and $b_i$, $\forall{SU_i}$. In Fig.~\ref{sim}, we give a pictorial overview about how buffer data is calculated at the beginning of each frame duration for $SU_i,\forall{i}$. Traffic arrives at SUs' buffers in a random way. We assume that traffic idle time, i.e., $T_{idle}^i, \forall{SU_i}$, follows Pareto distribution, whose density function can be written as:
\begin{align}
f(T_{idle}^i) = \frac{\alpha_i \beta_i^{\alpha_i}}{(T_{idle}^i)^{\alpha_i+1}}
\end{align}
where $\alpha_i$ is the shape parameter and $\beta_i$ is the scale parameter \footnote{We assume identical parameters for SUs}. $B_{in}^i$ number of bits arrive at $SU_i$'s buffer after traffic idle time. We consider $T_b^i$ time after which $B_{in}^i$ number of bits gets accumulated at $SU_i$'s buffer. We do not use any time stamp in $B_{in}^i$ as we consider identical number of bits arrival all time. 
\begin{figure}[t!]
	\centering
	\includegraphics[trim=0cm 7cm 0cm 0cm,clip=true,width=16cm]{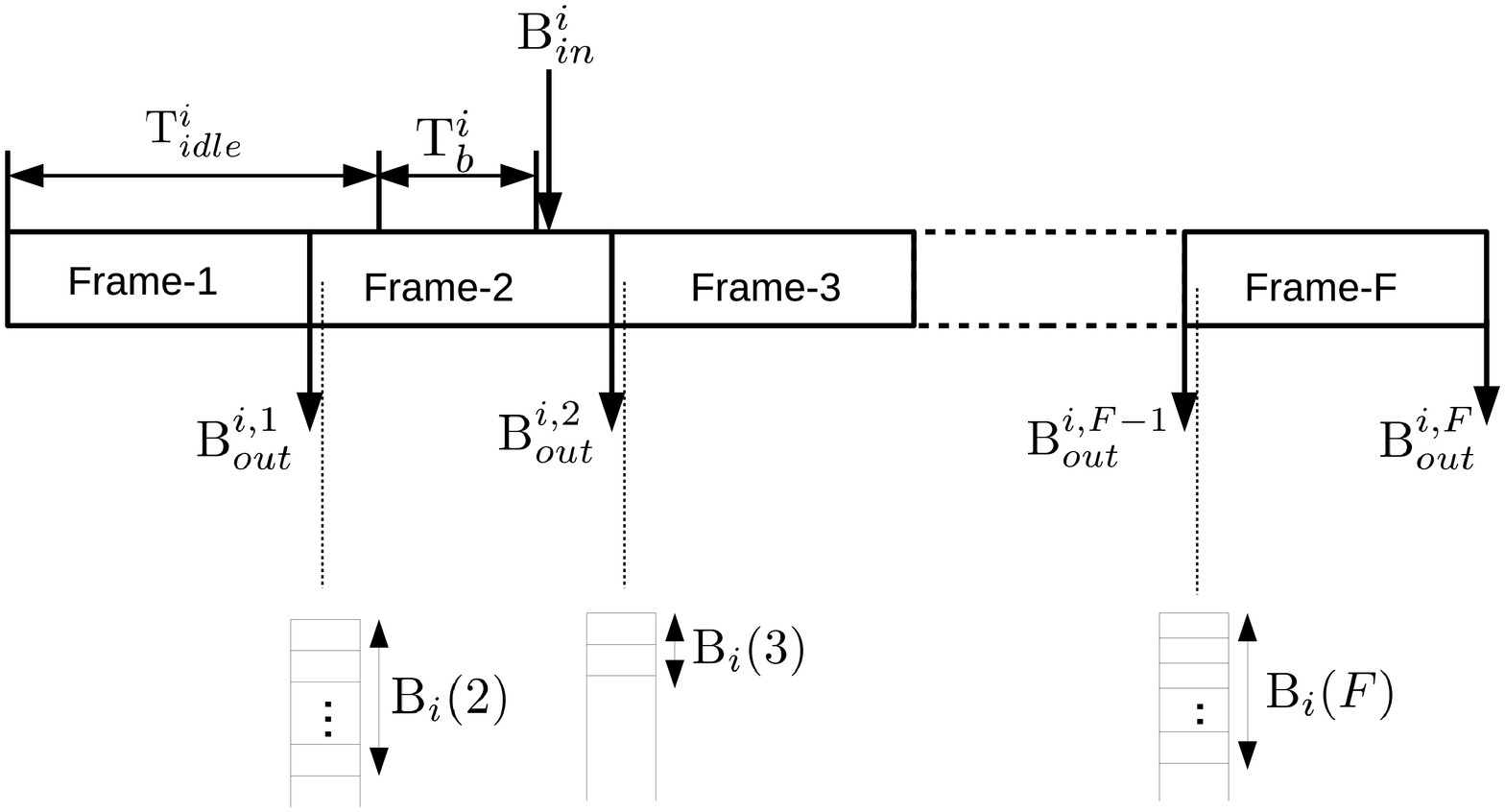}
	\DeclareGraphicsExtensions{.eps}
	\caption{Overview of buffer calculation in each frame}
	\label{sim}
\end{figure}
For example, number of bits at $SU_i$'s buffer at the beginning of third frame duration as shown in Fig.~\ref{sim}, can be calculated as:
\begin{align}
B_i(3)=B_i(2)-B_{out}^{i,2}+B_{in}^i. 
\label{buffer_calc}
\end{align}
where, $B_i(2)=B_i(1)-B_{out}^{i,1}$, is number of bits at $SU_i$'s buffer at the beginning of second frame duration and $B_i(1)=10$ bits $\forall{SU_i\in \mathcal{G}^{'}}$. From Fig.~\ref{sim}, it can be observed that during second frame duration $B_{in}^i$ number of bits arrive at $SU_i$'s buffer which we add in Equation~\eqref{buffer_calc}. We assume identical values for $\alpha_i=\alpha,\beta_i=\beta,B_{in}^i=B_{in}, T_b^i=T_b, \forall{SU_i\in \mathcal{G}}$.

For a particular frame duration, we correspond our simulation steps to different phases which we have shown in Fig.~\ref{frame} as follows:
\begin{itemize}
	\item After receiving all information from SUs, i.e., after Phase-1, as shown in Fig.~\ref{frame}, the FC finds optimal decision thresholds, optimal set for SUs from $\mathcal{G}$, and allotted time durations for selected SUs, following Algorithm~\ref{algo1}, \ref{algo3}, \ref{algo4}, and \ref{algo2}, which are discussed in Section~\ref{optimResource} and \ref{optimsolution}. Here, we assume that the FC is computationally powerful, such that, we can neglect the time for this evaluation process. 
	
	\item The FC instructs selected SUs to perform spectrum sensing in Phase-2.
	
	\item In Phase-3, selected SUs perform spectrum sensing considering the sensing channel gain, i.e., $\left|h\right|^2$, which we generate in a frame.
	
	\item After spectrum sensing, SUs send their hard decisions to the FC in Phase-4.
	
	\item The FC decides the PU to be idle when the summation of SUs' decisions is less than the FC's threshold as evaluated in Phase-1. The FC informs selected SUs about their allotted time durations in Phase-5. 
	
	\item In Phase-6, selected SUs either access (with channel gains $g_{SU_i-FC}, \forall{SU_i}$) the licensed spectrum or remain idle based on the outcome in Phase-5. Please note that we consider actual rate (i.e., bits/sec.) for SUs, e.g., if the PU follows hypothesis $H_l$ and sensing outcome is $H_m$, then rate for $SU_i$ is considered as $r_l^i;l,m\in\left\{0,1\right\}$.	
\end{itemize}

In Fig.~\ref{result11}, we plot average delays for SUs to clear their data from buffers for $M=5, \alpha=1, \beta=7, B_i(1)=10$ bits and $B_{in}^i=10$ bits, $\forall{SU_i\in \mathcal{G}}$, while varying $P(H_0)$ and $\zeta$. 

\begin{figure*}
	\centering
	\subfigure[Delay for varying $P(H_0)$]{
		\label{fig_11}
		\includegraphics[width=0.7\textwidth]{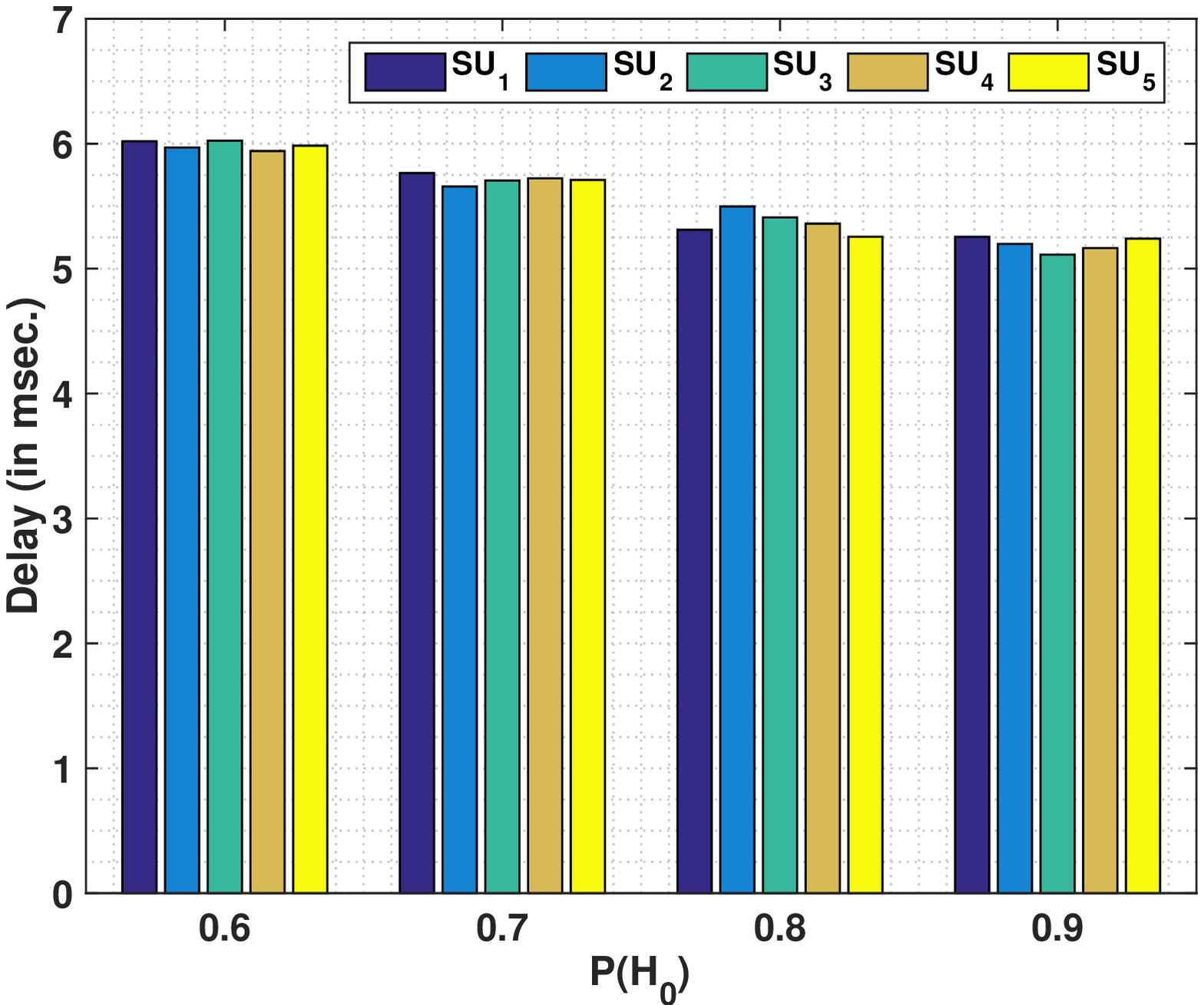}
	}
	~
	\subfigure[Delay for varying $\zeta$]{
		\label{fig_21}
		\includegraphics[width=0.7\textwidth]{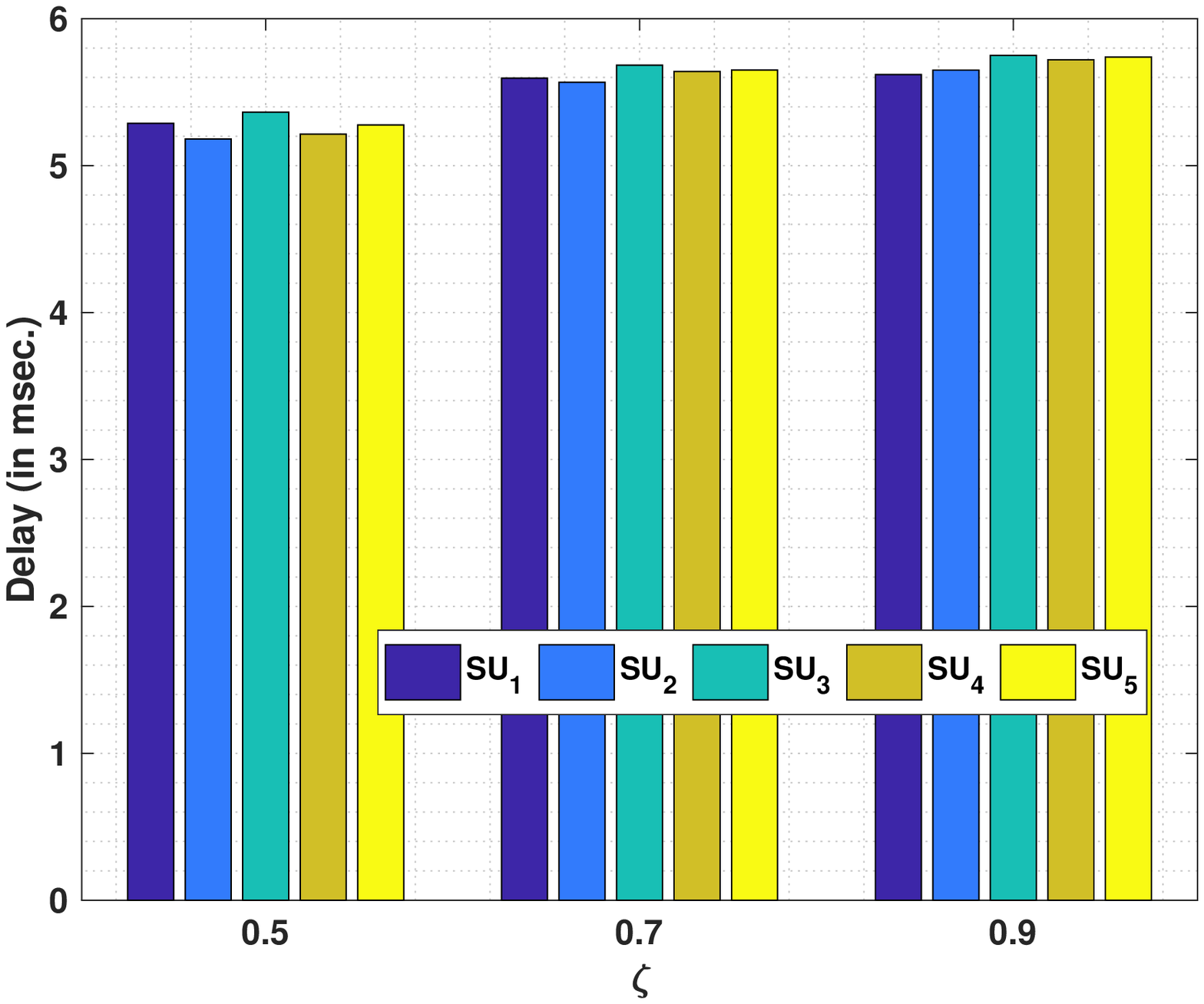}
	} 	
	\caption{Different delay values}
	\label{result11}
\end{figure*}

Fig.~\ref{result11}(a) is for $\zeta=0.7$ and $\gamma=-7$ dB, where we show that SUs take almost same time on average to clear their data from buffers for a given value of $P(H_0)$. SUs' delays reduce for increasing the value for $P(H_0)$, which is intuitive. As the PU uses the licensed band less frequently, secondary network gets more chance to access the licensed spectrum without any interference from the PU. Therefore, SUs can clear their buffer data in less time.

Fig.~\ref{result11}(b) is for $P(H_0)=0.8$ and $\gamma=-3$ dB, where we plot SUs' average delays varying $\zeta$. As $\zeta$ increases SUs take more time duration to clear their data from buffers due to less transmission opportunity to maintain stringent interference constraint. From both Fig.~\ref{result11}(a) and \ref{result11}(b), it can be observed that SUs take almost same time to clear their buffer data for a given value of $P(H_0)$ and $\zeta$, which indicates that our resource allocation algorithm is fair enough.

\subsubsection{Fairness check}
We further check the fairness of our algorithm following famous Jain's formula as given following \cite{jain1984quantitative}:
\begin{align}
 \mathcal{J}(x_1,..,x_n)=\frac{(\sum_{i=1}^N x_i)^2}{n\cdot \sum_{i=1}^n x_i^2}
 \label{Jain}
\end{align}
In our case, $x_i$ corresponds to the delay for $SU_i$. We put average delay values received from Fig.~\ref{result11}(a) and \ref{result11}(b) in Equation~\eqref{Jain}. Value of $\mathcal{J}(x_1,..,x_n)=1$ ranges from $1/n$ (worst case) to 1 (best case). We observe that for all values of $P(H_0)$ and $\zeta$, we get $\mathcal{J}(x_1,..,x_n) \approx 1$.

\section{Conclusion}\label{conclusion}
In this paper, we have considered a joint spectrum sensing and resource allocation problem for an opportunistic cooperative CR network. The FC plays dual role, i.e., as global decision maker and resource allocator. In order to get resources, SUs pay to the FC. SUs are allocated different time slots by the FC to make the SUs' transmission orthogonal over time. As the available time duration may not be sufficient for allocating time durations to all SUs, the FC may have to perform a selection procedure among SUs. Beside designing spectrum sensing thresholds, we also devise SU selection algorithm, which we prove optimal under certain condition.

\appendices

\section{$P2$ is not convex/quasi-convex over $P_{fa},k,$ and $\bm{t}$}\label{non_quasi_concavity}
 	We first prove the first statementTo prove this, we consider $\alpha_i=1,\forall{i}$. If we can show that $P2$ is not quasi-convex over $P_{fa}$ and $k$, then we can conclude that $P2$ is not convex/quasi-convex over $P_{fa},k,$and $\boldsymbol{t}$. To show this, we take help of bordered hessian matrix, which can be written for $P_{fa}$ and $k$ as follows:
 	\[
 	H=
 	\begin{bmatrix}
 	0 & \frac{\partial(U_{FC})}{\partial P_{fa}} & \frac{\partial(U_{FC})}{\partial k} \\
 	\frac{\partial(U_{FC})}{\partial P_{fa}} & \frac{\partial^2(U_{FC})}{\partial P_{fa}^2} & \frac{\partial^2(U_{FC})}{\partial P_{fa} \partial k} \\
 	\frac{\partial(U_{FC})}{\partial k} & \frac{\partial^2(U_{FC})}{\partial P_{fa} \partial k} & \frac{\partial^2(U_{FC})}{\partial k^2}
 	\end{bmatrix}
 	\] \noindent
 	We describe elements of matrix $H$ in Equation~\eqref{qceq1}-\eqref{qceq5}, where, 
 	\begin{align}
 	A&=P(H_0)\sum_{j=1}^M r_0^ia_it_i \nonumber\\
 	B&=P(H_1)\sum_{j=1}^M r_1^ia_it_i \nonumber
 	\end{align}
 	For quasi-concavity, the following rules need to be maintained:
 	\begin{subequations}
 	\begin{align}
 	\det\left[H_a\right] < 0 \label{cond1}\\
 	\det\left[H\right] > 0 \label{cond2}
 	\end{align}
 	\end{subequations}			
 	where,
 	\[
 	H_a=
 	\begin{bmatrix}
 	0 & \frac{\partial(U_{FC})}{\partial P_{fa}} \\
 	\frac{\partial(U_{FC})}{\partial P_{fa}} & \frac{\partial^2(U_{FC})}{\partial P_{fa}^2} 
 	\end{bmatrix}
 	\]

\begin{subequations}
 	\begin{align}
 		\frac{\partial(U_{FC})}{\partial P_{fa}} &= -\sum_{i=k}^M {M\choose i} \left[AP_{fa}^{i-1}(1-P_{fa})^{M-i-1}(i-MP_{fa}) + BP_{d}^{i-1}(1-P_{d})^{M-i-1}(i-MP_{d})\frac{\partial P_d}{\partial P_{fa}}\right] \label{qceq1}\\
 		\frac{\partial(U_{FC})}{\partial k} &= -A{M\choose k}P_{fa}^k(1-P_{fa})^{M-k} - B{M\choose k}P_{d}^k(1-P_{d})^{M-k} \label{qceq2} \\
 		\frac{\partial^2(U_{FC})}{\partial P_{fa}^2} &= -\sum_{i=k}^M {M\choose i} [AP_{fa}^{i-2}(1-P_{fa})^{M-i-2}\left\{(M-i-1)(i-MP_{fa})P_{fa}+i(i-P_{fa})(i-1-MP_{fa})\right\}  \nonumber\\
 		& +BP_{d}^{i-2}(1-P_{d})^{M-i-2}\left\{(M-i-1)(i-MP_{d})P_{d}+i(i-P_{d})(i-1-MP_{d})\right\}\frac{\partial P_d}{\partial P_{fa}}+ \nonumber\\
 		& BP_{d}^{i-1}(1-P_{d})^{M-i-1}(i-MP_{d})\frac{\partial^2 P_d}{\partial P_{fa}^2}] \label{qceq3}\\
 		\frac{\partial^2(U_{FC})}{\partial k^2} &= A{M\choose k} P_{fa}^k(1-P_{fa})^{M-k}\left[\frac{P_{fa}}{1-P_{fa}}\frac{1}{(K+1)(M-k-1)}-1\right] - \nonumber\\
 		& B{M\choose k} P_{d}^k(1-P_{d})^{M-k}\left[\frac{P_{d}}{1-P_{d}}\frac{1}{(K+1)(M-k-1)}-1\right] \label{qceq4}\\
 		\frac{\partial^2(U_{FC})}{\partial k \partial P_{fa}} &= A{M\choose k} P_{fa}^k(1-P_{fa})^{M-k}\left[\frac{k-MP_{fa}}{P_{fa}(1-P_{fa})}\right] + B{M\choose k} P_{d}^k(1-P_{d})^{M-k}\frac{\partial P_d}{\partial P_{fa}}\left[\frac{k-MP_{d}}{P_{d}(1-P_d)}\right] \label{qceq5}
 	\end{align}
\end{subequations} 	
 		
 \begin{figure*}
 		\centering
 		\includegraphics[trim=0cm 0cm 0cm 0cm,clip=true,width=12cm]{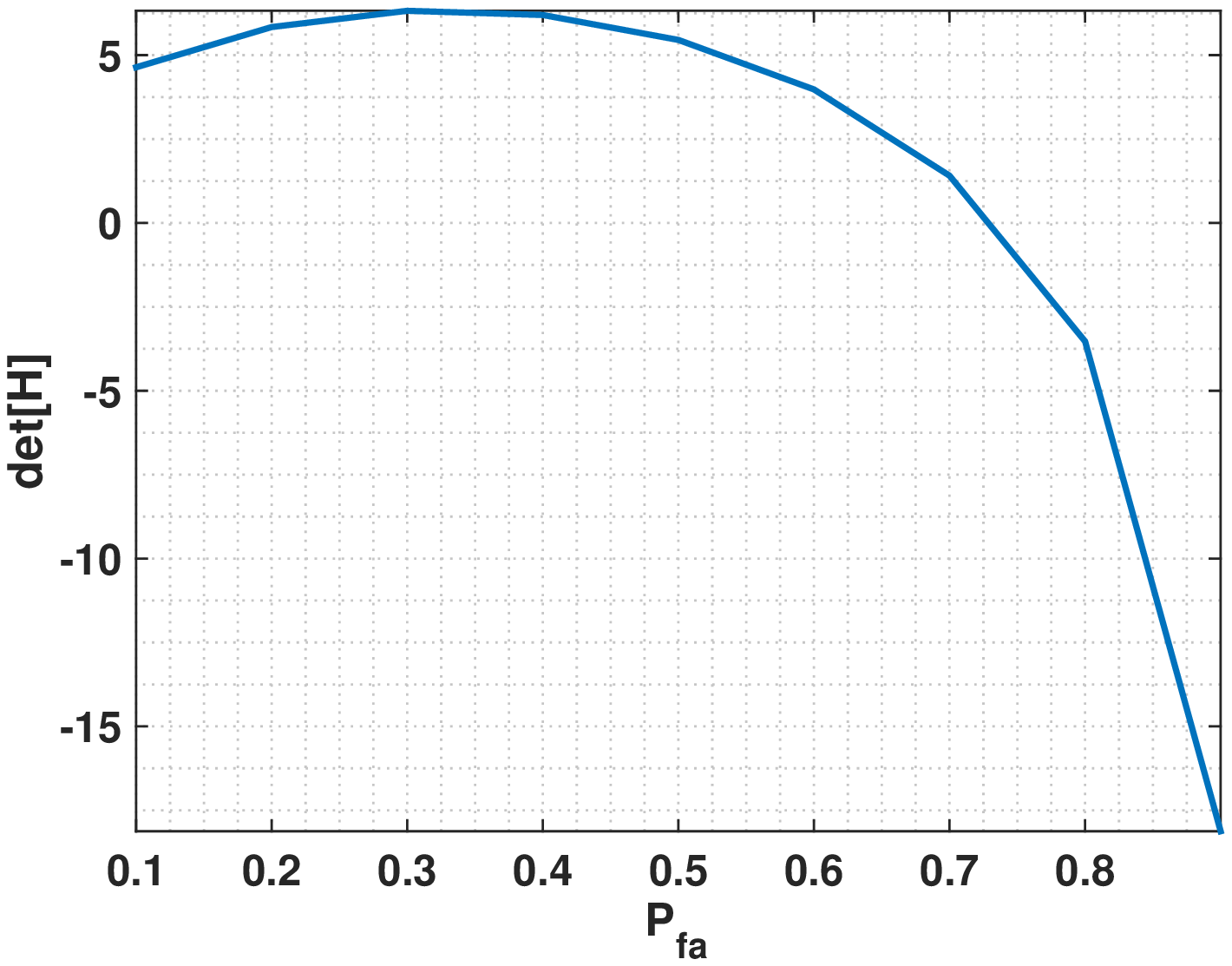}
 		\DeclareGraphicsExtensions{.eps}
 		\caption{$\det[H]$ vs $P_{fa}$}
 		\label{non-quasiconvex}
\end{figure*}
 	
 	It is observed that the condition as given in Equation~\eqref{cond1} always satisfies, however, the condition as given in Equation~\eqref{cond2}, does not satisfy always. We consider an example considering $M=k=5,P(H_0)=0.6,\gamma=-7.5$ dB, $N=40$, $\boldsymbol{r}_0=[7.4,8,8.2,0.2,9.5]$, $\boldsymbol{r}_1=[2.3,3.5,2.7,0.02,3.3]$, and $a_it_i=0.1, \forall{SU_i}$. We plot $\det[H]$ for varying $P_{fa}$ in Fig.~\ref{non-quasiconvex}, where it can be observed that $\det[H]$ is not always positive. Therefore, we can conclude that $P2$ is not quasi-convex/convex over $P_{fa}$ and $k$. As the objective function is not concave in lower dimension, we can conclude that the objective function is not concave in higher dimension also, i.e., for $P_{fa}$, $k$, and $\boldsymbol{t}$.
 	
 	The objective function and constraints of $P1$ are linear functions of $\boldsymbol{t}$. Therefore, we can conclude that for given values for $\boldsymbol{\alpha}$, $P_{fa}$ and $k$, the optimization problem $P2$ belongs to convex optimization family.

\ifCLASSOPTIONcaptionsoff
  \newpage
\fi
\bibliographystyle{IEEEtran}
\bibliography{references} 

\end{document}